\documentclass[sigconf,nonacm]{acmart}

\AtBeginDocument{%
  \providecommand\BibTeX{{%
    \normalfont B\kern-0.5em{\scshape i\kern-0.25em b}\kern-0.8em\TeX}}}

\usepackage{subcaption}
\usepackage{makecell}
\usepackage{algpseudocode}
\usepackage{algorithm}
\usepackage{placeins}
\usepackage{amsfonts}
\usepackage{booktabs}
\usepackage{enumitem}
\usepackage{colortbl}
\usepackage[toc,page]{appendix}
\usepackage{multirow}
\newfloat{procedure}{tbp}{lop}
\floatname{procedure}{Procedure}

\usepackage{times}
\usepackage{soul}
\usepackage{url}
\usepackage[utf8]{inputenc}
\usepackage{graphicx}
\usepackage{multicol}
\usepackage{bbm}

\usepackage{amsmath,amssymb,amsthm}
\newtheoremstyle{break}%
  {}{}%
  {\itshape}{}%
  {\bfseries}{}%
  {\newline}%
  {}%
\theoremstyle{break}

\usepackage{xcolor}
\usepackage{geometry}
\usepackage{hyperref}
\hypersetup{
    colorlinks = false,
    pdfborder = {0 0 0},
}
\usepackage{tikz}
\newcommand{\boxedref}[1]{%
    \tikz[baseline=(char.base)]{%
        \node[shape=rectangle,draw=red,inner sep=2pt] (char) {\hyperref[#1]{\ref*{#1}}};%
    }%
}

\newtheorem{theorem}{Theorem}[section]

\newtheorem{lemma}[theorem]{Lemma}
\newtheorem{assumption}{Assumption}
\newtheorem{property}{Property}

\usepackage{arydshln}

\DeclareRobustCommand{\okina}{%
  \raisebox{\dimexpr\fontcharht\font`A-\height}{%
    \scalebox{0.8}{`}%
  }%
}

\begin{document}

\title{A Statistically Reliable Optimization Framework for Bandit Experiments in Scientific Discovery}

\author{Tong Li}
\affiliation{
  \institution{University of Toronto}
  \country{Canada}}
\email{tongli@cs.toronto.edu}

\author{Travis Mandel}
\affiliation{
  \institution{University of Hawai{\okina}i at Hilo}
  \country{United States}}
\email{tmandel@hawaii.edu}

\author{Goldie Phillips}
\affiliation{
  \institution{Independent Researcher}
  \country{Trinidad and Tobago}
  \email{phillipstgoldie@gmail.com}
}

\author{Anna Rafferty}
\affiliation{
  \institution{Carleton College}
  \country{United States}}
\email{arafferty@carleton.edu}

\author{Eric M. Schwartz}
\affiliation{
  \institution{University of Michigan}
  \country{United States}
  \email{ericmsch@umich.edu}
}

\author{Dehan Kong}
\affiliation{
  \institution{University of Toronto}
  \country{Canada}}
\email{dehan.kong@utoronto.ca}

\author{Joseph J.~Williams}
\affiliation{
  \institution{University of Toronto}
  \country{Canada}}
\email{williams@cs.toronto.edu}

\begin{abstract}

Scientific experimentation is largely driven by statistical hypothesis testing to determine significant differences in interventions. Traditionally, experimenters allocate samples uniformly between each intervention. However, such an approach may lead to suboptimal outcomes - multi-armed bandits (MABs) addresses this problem by allocating samples adaptively to maximize outcomes. Yet, two challenges have hindered the use of MABs in scientific domains. First, common hypothesis tests (e.g., $t$-tests) become invalid under adaptive sampling without correction, leading to inflated type~I and type~II errors. This is an understudied problem, and prior solutions suffer from issues such as low statistical power which prevent adoption in many practical settings. Second, practitioners must explicitly balance cumulative reward with statistical efficiency, yet no general methodology exists to quantify this trade-off across algorithms. In this paper, we study assumption modification and critical region correction approaches for hypothesis testing that enable common tests to be applied to adaptively collected data. We provide heuristic justification for its power efficiency and show in simulation that it achieves higher power than existing approaches. Further, we derive a theoretically and practically motivated objective function for adaptive experiment evaluation, which we integrate into a unified experimental framework. Our framework asks experimenters to specify an \textit{experiment extension cost} for their problem, and based on that enables our proposed optimization procedure to select the bandit algorithm that best balances reward and power in their setting. 
We show that our approach enables practitioners to improve outcomes with only slightly more steps than uniform randomization, while retaining statistical validity.

\end{abstract}

\keywords{adaptive experimentation; multi-armed bandits; hypothesis testing; power analysis; reward--inference tradeoff}

\maketitle

\section{Introduction}
Hypothesis-testing–driven experiments are ubiquitous across scientific domains, including clinical trials \citep{austrian2021cds}, psychology \citep{cohen1988statistical}, education \citep{sales2023auxiliary}, and biology \citep{zar2010biostatistical}.
Scientists in these domains are often primarily concerned with statistically significant differences between interventions, for instance, to argue that an intervention is superior to a control. 
When designing experiments, scientists often use power analysis to determine how many samples to collect for a chosen hypothesis test (e.g., a $t$-test) and target type~I and type~II error rates \citep{cohen1988statistical,casella2002statistical,faul2009gpower}, and then collect that many samples using uniform randomization (UR). However, this uniform approach may allocate many samples to interventions which result in very poor outcomes (for instance, harming participants~\cite{rosenberger2016randomization} or resulting in a loss of revenue). The multi-armed bandit (MAB) framework offers a promising alternative by allocating participants to better-performing interventions (\textbf{arms}) and improving cumulative outcomes (\textbf{reward}) \citep{thompson1933likelihood,lattimore2020bandit}. 

However, MAB algorithms collect samples \textit{adaptively}, that is, the arm selected for sample $t$ depends on the reward collected from samples $1\dots t-1$, violating the assumptions of classic statistical tests and causing them to not be statistically valid in this setting~\citep{villar2015multiarmed,deshpande2018accurate,hadad2021confidence,smith2018bayesian}. Practically, this leads to inflated type~I and type~II errors, incentivizing scientists to largely avoid adaptive sampling despite the potential for greatly increased reward. Prior work has addressed this problem typically through generating specialized tests~\cite{deliu2021efficient} or bandit algorithms~\cite{xiang2022mabvsab}, which may be undesirable for practitioners who wish to use familiar and well-studied tests and algorithms. General-purpose solutions have been understudied - to the best of our knowledge, the Adaptive Randomization Test (ART) \citep{ham2023art} is the only correction approach that theoretically generalizes to all bandit algorithms and tests. However, there is no practical analysis of its performance with common hypothesis tests and multi-armed bandit algorithms, and as we show in Section~\ref{sec:result}, it can have exceptionally low statistical power for some popular algorithms and tests, leading it to be unattractive to practitioners.  There is a lack of practical and statistically sound alternatives to address this problem.

Even if statistical tests are corrected to return valid results, there exists a fundamental trade-off between reward (achieved through more exploitation) and hypothesis testing efficiency (which requires more balanced exploration among arms). Na\"ively applying  popular bandit algorithms such as Thompson Sampling (TS) \citep{thompson1933likelihood,chapelle2011empirical,russo2018tutorial}, will require large numbers of additional samples (longer \textbf{horizon}) to achieve the desired power compared to the traditional UR approach~\citep{williams2021challenges}. There remains a lack of tools to help practitioners select bandit algorithms that balance reward and horizon according to their problem-specific preferences.

To address these gaps, we make three key contributions. First, we propose a general-purpose \emph{algorithm-induced} testing correction that constructs the null distribution of the test statistic by simulating data collection under the same adaptive algorithm. We show this approach consistently achieves higher power than ART, performing an order of magnitude better on deterministic algorithms such as UCB~\cite{auer2002finite}. Second, we develop an objective function for adaptive experiment evaluation that allows experimenters to explicitly adjust the priority between reward maximization and the number of samples required to achieve a desired statistical power. Third, we present an optimization framework and toolkit\footnote{\url{https://anonymous.4open.science/r/Bandit-Simulation-4B02}} that efficiently recommends the best bandit algorithm parameters and experiment length for the user's desired cost and statistical constraints.

\section{Problem Setup}\label{sec:problem_setup}

We consider an adaptive experimentation design problem in which the experimenter seeks to both maximize reward during the experiment and conduct valid hypothesis testing at its conclusion. In this section, we introduce the experimental setting, notation, and basic concepts that will be used throughout the paper. 





\paragraph{Basic MAB setup}
We start by introducing the basic multi-armed bandit (MAB) data collection setup.
Let $K$ be the number of distinct interventions/\textbf{arms} in an experiment. We consider a stochastic setting where each arm $a$ is associated with a fixed but unknown reward distribution $\nu_a$, and denote the arm mean reward as $\mu_a = \mathbb{E}[r_t \mid a_t = a]$. Let $\vec{\nu} = (\nu_1, \dots, \nu_K)$ denote the reward distributions for all arms. Prior to the experiment, the experimenter specifies an MAB algorithm $\pi$ and the experiment length $T$. At each time step $t = 1, 2, \dots, T$, the bandit algorithm $\pi$ selects an arm $a_t \in \mathcal{A} = \{1, \dots, K\}$ and observes reward $r_t \sim \nu_{a_t}$. We denote the random ordered tuple of the experiment history up to time $t$ as $h_t = (a_1, r_1, a_2, r_2, \dots, a_t, r_t)$, and let $\mathcal{H}_t$ be the (possibly infinite) set of all possible values of $h_t$.

\paragraph{Hypothesis testing}
Based on the MAB experiment data $h_T$, the experimenter wishes to conduct a hypothesis test between two competing assumptions about the reward distributions of all arms $\vec{\nu}$, $H_0$ against $H_1$ (often called the null and the alternative hypothesis, respectively). To conduct the test, the experimenter specifies a critical region $C \subset \mathcal{H}_T$, and they reject $H_0$ if $h_T \in C$, sometimes loosely referred to as determining statistical significance. The set $C$ should be defined in a way that $h_T \in C$ means there is strong statistical evidence against $H_0$ and in favor of $H_1$, although the details differ based on the chosen hypothesis test. 

In this paper, we focus on classical hypothesis tests such as $t$-tests, ANOVA, and Tukey’s test. While both the problem formulation and our proposed solution can be extended more broadly, we restrict attention to tests whose critical regions are specified through a test statistic function $S$ and take the form of one- or two-sided rejection regions. For example, in a two-arm $t$-test with hypotheses $H_0:\mu_1=\mu_2$ and $H_1:\mu_1\neq\mu_2$, the test statistic $S(h_T)$ is the standard $t$-statistic (or its normal approximation for large samples). A two-sided 95\% test rejects $H_0$ when $|S(h_T)|>1.96$, yielding the critical region $C=\{h_T : |S(h_T)|>1.96\}$.

Experimenters want to control Type~I error $\alpha$ and Type~II error $\beta$ for their test of interest, using the data adaptively collected for algorithm $\pi$. The Type~I error of a test, also known as the False Positive Rate (FPR), is the chance of rejecting $H_0$ when it is true, defined as $ \alpha = \Pr\!\left(h_T \in C \mid H_0, \pi, T\right)$. 
In contrast, the Type~II error is the probability of failing to reject $H_0$ when $H_1$ is true, defined as 
$\beta = \Pr\!\left(h_T \notin C \mid  H_1, \pi, T\right).$ We refer to power as the complement of the Type~II error, $1-\beta$.

\paragraph{Power Analysis}
Statistical tests are commonly used to analyze experimental data; however, practitioners often wish to determine the sample size required to achieve a desired power $1-\beta$ (at a fixed FPR $\alpha$) before the experiment. This process is known as power analysis.

Note that Type~II error (and power) depends on the distribution of the experiment history $h_T$ under $H_1$. Such a distribution varies across underlying arm distributions $\vec{\nu} \in H_1$. Therefore, when conducting a power analysis in practice, power is evaluated with respect to a specified restriction on $\vec{\nu}$, reflecting the experimenter’s design priorities when distinguishing $H_1$ from $H_0$~\citep{faul2009gpower}.
Continuing our $t$-test example, one may wish to guarantee 80\% power at 5\% FPR when the gap between arm means satisfies $|\mu_1-\mu_2|=0.1$. The output of the power analysis is the number of samples required to meet these constraints.




\paragraph{Our problems}

To summarize, at the beginning of an experiment, the experimenter specifies $H_0$, $H_1$, and desired error levels $(\alpha,\beta)$. They then seek to determine the algorithm $\pi$, a valid $C$, and the required $T$. Note that controlling the FPR and evaluating power for MAB data is more challenging than in traditional experiments. In this work, we consider two related problems:

Problem~1 (test correction): For any algorithm $\pi$ that collects data adaptively, we seek test-correction methods that can construct a valid critical region $C_{\pi,\alpha}$ controlling FPR at level $\alpha$. Such correction may be applied either to the classical test statistic $S$ or directly to the critical region $C$. Subject to valid FPR control, we further aim to achieve higher power for a given sampling budget and algorithm. We discuss this problem in Section~\ref{sec:problem_1_hypothesis_testing}.

Problem~2 (algorithm comparison): The experimenter seeks to select an algorithm $\pi$ that achieves higher reward while requiring fewer samples $T$ to reach a desired power level. Note that these objectives are often in tension across different choices of the algorithm $\pi$. In Section~\ref{sec:problem_2_reward_inference}, we first examine the associated power analysis considerations for MAB experiments, and then formalize this trade-off through an objective function to guide algorithm selection.


The problems are interconnected: a proper estimate of power (Problem~1) is required to subsequently analyze its trade-off with reward (Problem~2).

\section{Related Work}
\label{sec:related_work}

In this work, we discuss approaches for conducting valid hypothesis tests for adaptive experiments and their tradeoff with experimental outcomes (reward). To maximize the likelihood of real-world impact, we focus on an algorithm-agnostic framework that can be applied across commonly used adaptive multi-armed bandit algorithms and hypothesis tests. To the best of our knowledge, there is limited discussion of generalizable test correction methods for arbitrary algorithms and tests, nor is there a formalized objective function concerning such a general tradeoff. Hence, in this section, we review the most closely related work.

\paragraph{Hypothesis Testing with Adaptively Collected Data}
Given the variety of commonly used hypothesis tests and MAB algorithms, relatively little analysis-focused work attempts to provide a general framework that is agnostic to both algorithm and test choice. One line of work has focused on specially constructed tests for an adaptive setting.
\citet{deliu2021efficient} proposed an allocation-probability-based test statistic, which is restricted to TS. Other work has proposed testing approaches based on martingales~\cite{xu2021unified} and probability clipping~\cite{yao2021powerconstrained}, which guarantee statistical validity for all bandit algorithms, but rely on a restricted set of hypotheses and custom statistical tests.

Another line of work has discussed custom bandit algorithms that  achieve good statistical performance in certain settings. \citet{xiang2022mabvsab} proposed a fixed allocation strategy that guarantees statistical power at least as high as uniform randomization in two-armed experiments; their analysis is limited to binary rewards and two-arm settings. Building on this line of work, \citet{kong2024sequential} proposed an elimination-based MAB algorithm that achieves Type~I and Type~II error control for specified pairwise tests, with a focus on best-arm identification.
However, these methods do not allow full freedom in algorithm choice, nor generalization to all common statistical tests.

\citet{williams2021challenges} introduced the concept of critical region correction to generalize across bandit algorithms and tests; however, they considered only a two-armed setting in which the ground-truth reward distributions are known a priori, limiting its applicability in practice. Inspired by their work, we formalize a procedure that removes these practical constraints. Moreover, we provide theoretical justification and empirical evaluation demonstrating that our method achieves strong power while maintaining valid FPR control.

To the best of our knowledge, the Adaptive Randomization Test (ART)~\cite{ham2023art} is the only approach that can correct arbitrary tests run under arbitrary bandit algorithms in arbitrary environments. ART controls Type~I error by using bootstrap resampling to simulate the null distribution of the test statistic while fixing the observed reward sequence. While theoretical promising in terms of Type~1 error, no analysis of its Type~II error is provided. We evaluate this method empirically  in Section~\ref{sec:problem_1_hypothesis_testing} and find it to have extremely low power in some settings, limiting its practical applicability. 

\paragraph{Balancing Reward and Inference Objectives}
Another key distinction of our work compared to the testing-correction methods discussed above is that they do not formalize the trade-off  between reward and hypothesis testing or statistical inference more generally. 

We now discuss work that examines how to formulate related trade-offs between reward and quantities related to statistical inference. Some work \citep{zhong2023achieving,simchi-levi2023mabdesign} adopts a Pareto-frontier perspective on related problems, proposing algorithms that achieve Pareto optimality between reward and best-arm identification (BAI) or average treatment effect (ATE) accuracy, respectively. However, this differs fundamentally from our problem, which focuses on balancing reward against the finite horizon induced by statistical hypothesis testing under explicit Type~I and Type~II error constraints. Our problem is therefore less amenable to a Pareto analysis, which in the bandit literature is typically asymptotic rather than finite-horizon in nature.

Similarly, prior work \citep{liu2014trading,erraqabi2017trading} has proposed objective functions that use a weighting parameter to trade off scaled estimation error with reward. However, this does not easily translate to statistical hypothesis testing, especially in an interpretable fashion.

Lastly, we also want to highlight work that introduces the cost concept into MAB problem formulations \citep{xia2015budgeted,kanarios24cabai,bui11committing}. Though they do not trade off reward with inference, their ideas inform the construction of our objective function.

\section{Problem 1: Adjusting Efficiently for Hypothesis Testing Confidence}
\label{sec:problem_1_hypothesis_testing}



Hypothesis testing with MAB data presents unique challenges \citep{smith2018bayesian, ham2023art}. For classical hypothesis tests involving multiple arms, it is often assumed that the sampling follows a fixed (non-adaptive) ratio among arms over time. However, when data are sampled using MAB algorithms, the sampling ratio depends on the observed reward history. To see why this matters, consider the null setting with two identical arms. If one arm happens to look worse early on due to random noise, an adaptive algorithm will sample it less frequently. This ``freezes'' the arm at an underestimated value while the other arm continues to be sampled, artificially inflating the apparent gap. This can cause various statistical inference issues, including biased estimation of arm means, distorted test statistic distributions, and inflated Type~I error/FPR.

\begin{table}[h]
\centering
\small
\setlength{\tabcolsep}{4pt}
\caption{
FPR under the null hypothesis $H_0\!:\,\mu_1=\mu_2=\mu$ for varying values of $\mu$.
Results compare the classical one-sided two-sample $t$-test with a fixed critical threshold of $1.64$ (from the $z$-table) to its AIT-corrected counterpart.
}
\label{tab:t_test_classical_vs_AIT_fpr}
\begin{tabular}{c ccccc}
\toprule
True location of $\mu$ in $H_0$
& 0.1 & 0.3 & 0.5 & 0.7 & 0.9 \\
\midrule
FPR (AIT corrected)
& 0.052 & 0.050 & 0.050 & 0.049 & 0.050 \\
FPR (using z-table)
& 0.071 & 0.086 & 0.099 & 0.108 & 0.132 \\
\bottomrule
\end{tabular}
\end{table}

Table~\ref{tab:t_test_classical_vs_AIT_fpr} illustrates the FPR inflation when using a classical critical region for a one-sided $t$-test ($1.64$ using a $z$-table). 
Moreover, we note that the degree of FPR inflation varies with the specification of the null setting. This implies that the null distribution of a classical test statistic can change according to the exact specification of $\vec{\nu}$ under the null hypothesis $H_0$. Such dependency/sensitivity poses a challenge for applying classical tests to MAB data.


In this section, we seek to develop a method to correct the FPR of arbitrary statistical tests and bandit algorithms. Similar to the assumption made by \citet{ham2023art}, \textbf{we relax the problem and focus on controlling FPR on a subset of the given classical $H_0$}.


\begin{assumption}[Equal arm distributions under the null]\label{assump:equal_dist}
Under the $H_0$, all arms share the same reward distribution:
\begin{equation}\label{eq:adp_H0}
\nu_1 \stackrel{d}{=} \nu_2 \stackrel{d}{=} \cdots \stackrel{d}{=} \nu_K .
\end{equation}
\end{assumption}

This assumption restricts some hypotheses; however, it matches the common understanding of a null hypothesis $H_0$, that is, that under a null hypotheses all conditions are identical.  Under this assumption, we can use all data in $h_T$ collectively and use MLE (or bootstrap) to estimate the null hypothesis reward distribution $\nu_{H_0}$. In what follows, we state our proposed test correction procedure, assuming the estimation on $\nu_{H_0}$ is given. Later in this section, we discuss and empirically show that it can achieve good power while controlling the FPR reasonably well.



\paragraph{Our proposed test correction method} 
We propose a plug-in correction called algorithm-induced test (AIT) correction. It is illustrated for a one-sided test in Procedure~\ref{proc:alg_induced_correction}. Note that $\nu_{H_0}$ is estimated from the full collected dataset $h_T$ under Assumption~\ref{assump:equal_dist}. At a high level, given an MAB setting $(K, T, \pi, \vec{\nu} = (\nu_{H_0}, \ldots, \nu_{H_0}))$, our approach simulates the distribution of a given test statistic $S$. Note that even if the algorithm $\pi$ is deterministic, the repeated simulation is important given the stochasticity in rewards drawn from $\nu_{H_0}$.  Using that simulated distribution, we follow the critical region form (e.g., a one-sided interval for a one-sided test) to manually calibrate the thresholds for FPR control. The output of the test is a critical region for each timestep; in many cases the next step would be to compare the actual test statistic $S(h_T)$ with $C_T$ to test for statistical significance. In this paper, we focus on tests where the original critical region is a one- or two-sided continuous interval, although our approach can be extended more broadly.

A key design choice is that our approach uses exactly the same test statistic $S$ calculation and the same critical region form as the original test. One may question the reasonableness of doing so, since Table~\ref{tab:t_test_classical_vs_AIT_fpr} shows that adaptive experimentation can change the critical region significantly. 
In what follows, we discuss a scenario where our correction results in the most powerful test.

\begin{procedure}[H]
\caption{Algorithm-Induced Test Correction (AIT)}\label{proc:alg_induced_correction}
\begin{algorithmic}[1]
\State \textbf{Input:} Classical hypothesis $H_0$, $H_1$ with test statistic $S$; number of arms $K$; horizon $T$; (estimated) $H_0$ reward distribution $\nu_{H_0}$; algorithm $\pi$; FPR constraint $\alpha_0$; simulation repetitions $M$
\Procedure{AlgoInducedCrit}{$K,\, T,\, \nu_{H_0},\, S,\, \pi,\, \alpha_0,\, M$}
    \For{$n = 1$ to $M$}
        \State Run \textbf{Bandit}$(K, \vec{\nu}=(\nu_{H_0},...,\nu_{H_0}), T, \pi)$ and obtain $h_{n,T}$
        \State Record test statistic $s_{n,t} = S(h_{n,t})$ for $t=1,2,...,T$
    \EndFor
    \State Compute $q_t$ as the $(1-\alpha_0)$-th quintile from $\{s_{n,t}\}_{n=1}^M$ for $t=1,2,...,T$
\EndProcedure
\State \textbf{Output:} Adjusted critical region sequence (right tail) $\{C_t\}_{t=1}^T$ with $C_t = (q_t,\infty)$ for $t=1,2,...,T$
\end{algorithmic}
\end{procedure}



\paragraph{Most powerful test for simple hypotheses.}
We extend the notion of simple hypotheses to the MAB setting, where both the null and alternative hypotheses fully specify the reward distributions across arms, denoted by $\vec{\nu}_0$ and $\vec{\nu}_1$, respectively. For example, in a two-armed Bernoulli setting one may consider $H_0:\mu_1=\mu_2=0.5$ versus $H_1:\mu_1=0.6,\mu_2=0.4$. Given data $h_T$ collected over $T$ rounds, the classical likelihood ratio test (LRT) statistic is $\frac{\prod_t p(r_t \mid a_t, \vec{\nu}_1)}{\prod_t p(r_t \mid a_t, \vec{\nu}_0)}$. Our main theorem shows that applying AIT to this statistic yields the most powerful test  under an adaptive data collection process $\pi$.

\begin{theorem}[AIT-optimality of the LRT on simple hypotheses]
\label{thm:ait_lrt_optimal}
Let $\pi$ be an arbitrary MAB algorithm. For testing simple hypotheses, $\vec{\nu}_0$ against $\vec{\nu}_1$, using data collected under $\pi$, the test with critical region constructed from classical LRT,
\[
\mathcal{C}_k = \{ h_T:\frac{\prod_t p(r_t \mid a_t, \vec{\nu}_1)}{\prod_t p(r_t \mid a_t, \vec{\nu}_0)} > k \}
\]
is the most powerful test at level
\[
\alpha = \mathbb{P}\{ h_T \in \mathcal{C}_k \mid \pi, \vec{\nu}_0 \}.
\]
\end{theorem}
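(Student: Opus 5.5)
The plan is to reduce the statement to the classical Neyman--Pearson lemma by writing down the full likelihood of the observed history $h_T$ under the adaptive process. Fix $\pi$ and write $\pi_t(a \mid h_{t-1})$ for the (possibly randomized) probability that the algorithm selects arm $a$ at time $t$ given the past. Under any environment $\vec{\nu}$, the density of the history (with respect to the product of counting measure on arms and a dominating measure on rewards) factorizes as
\[
p(h_T \mid \pi, \vec{\nu}) = \prod_{t=1}^T \pi_t(a_t \mid h_{t-1})\, p(r_t \mid a_t, \vec{\nu}).
\]
The essential point is that the algorithm's selection probabilities depend only on the observed past and not on $\vec{\nu}$.

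The first step is to compute the likelihood ratio of the history under $\vec{\nu}_1$ versus $\vec{\nu}_0$. The policy terms $\pi_t(a_t\mid h_{t-1})$ are identical in numerator and denominator, so they cancel wherever they are positive. This gives
\[
\frac{p(h_T\mid\pi,\vec{\nu}_1)}{p(h_T\mid\pi,\vec{\nu}_0)} = \frac{\prod_t p(r_t\mid a_t,\vec{\nu}_1)}{\prod_t p(r_t\mid a_t,\vec{\nu}_0)}.
\]
Hence $\mathcal{C}_k$ is exactly the Neyman--Pearson region $\{h_T : p(h_T\mid\pi,\vec{\nu}_1) > k\, p(h_T\mid\pi,\vec{\nu}_0)\}$ for the simple-versus-simple problem on the sample space $\mathcal{H}_T$. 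Histories with $\pi_t(a_t\mid h_{t-1})=0$ for some $t$ have probability zero under both hypotheses and can be ignored.

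The second step is to run the standard Neyman--Pearson argument directly, to stay self-contained. Take any critical region $C$ with $\mathbb{P}(h_T\in C\mid\pi,\vec{\nu}_0)\le\alpha$, and write $p_j$ for $p(h_T \mid \pi, \vec{\nu}_j)$. Consider the pointwise inequality
\[
(\mathbbm{1}_{\mathcal{C}_k}-\mathbbm{1}_{C})\,(p_1 - k\,p_0)\ge 0.
\]
It holds because the first factor is nonnegative exactly where $p_1 > k p_0$, and nonpositive otherwise. Integrating over $\mathcal{H}_T$ gives
\[
\mathbb{P}_1(\mathcal{C}_k)-\mathbb{P}_1(C)\ge k\big(\mathbb{P}_0(\mathcal{C}_k)-\mathbb{P}_0(C)\big)\ge 0,
\]
since $\mathbb{P}_0(\mathcal{C}_k)=\alpha$ and $k\ge 0$. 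This shows $\mathcal{C}_k$ has power at least that of every level-$\alpha$ test. The level of $\mathcal{C}_k$ is, by definition, the probability the AIT procedure estimates by simulation with $\vec{\nu}_0$ known. This links the theorem to Procedure~\ref{proc:alg_induced_correction}: calibrating the threshold on the LRT statistic by simulating $\pi$ under $\vec{\nu}_0$ recovers exactly $k$.

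I expect the main obstacle to be measure-theoretic bookkeeping rather than the core inequality. Three points need care. First, one must justify the factorized density for a general, possibly randomized, history-dependent $\pi$ with continuous rewards. Second, the dominating measure must be common to $\vec{\nu}_0$ and $\vec{\nu}_1$; I would assume $\nu_{0,a}$ and $\nu_{1,a}$ admit densities with respect to a common $\sigma$-finite measure for each $a$. Third, one must handle histories where the denominator vanishes, treating the ratio as $+\infty$ there.

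I would handle the first point by building the law of $h_T$ sequentially through regular conditional distributions, i.e. as a composition of Markov kernels. Any external randomization of $\pi$ enters only through the kernels $\pi_t$, which are independent of $\vec{\nu}$, and therefore cancels in the same way.
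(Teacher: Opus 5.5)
Your proposal is correct and follows the paper's route. You factor $p(h_T\mid\pi,\vec{\nu})$ by the chain rule, cancel the $\vec{\nu}$-independent action-selection terms so that the history likelihood ratio equals the classical LRT statistic, and apply Neyman--Pearson to the two induced laws on $\mathcal{H}_T$; the only differences are that you prove the Neyman--Pearson inequality inline rather than citing it, and you treat the dominating measure and zero-probability histories explicitly where the paper's lemma simply assumes $p(h_T\mid\vec{\nu}_i,\pi)>0$.
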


\paragraph{Proof sketch.}
Fix an adaptive algorithm $\pi$. By the Neyman--Pearson lemma, the most powerful level-$\alpha$ test for testing a simple null $\vec{\nu}_0$ against a simple alternative $\vec{\nu}_1$ is based on the likelihood ratio $r_\pi(h_T)=\frac{p(h_T\mid\vec{\nu}_1,\pi)}{p(h_T\mid\vec{\nu}_0,\pi)}$. Although this expression appears to depend on $\pi$, we can make use of the conditional independence structure of the MAB process and show that, whenever $p(h_T\mid\vec{\nu}_0,\pi)>0$, it can be written as
$r_\pi(h_T)=\frac{\prod_t p(r_t \mid a_t, \vec{\nu}_1)}{\prod_t p(r_t \mid a_t, \vec{\nu}_0)}$.
That is, the likelihood ratio reduces to the classical LRT statistic computed on the realized actions and rewards, and is invariant to the choice of $\pi$. Consequently, the classical LRT remains the most powerful test under adaptive data collection. Nevertheless, recalibration is required to ensure valid Type~I error control, since the sampling distribution of $h_T$ generally depends on $\pi$. A detailed proof is provided in Appendix~\ref{app:proof_ait_lrt_optimal}.

While this does not show our AIT approach is most powerful in all settings, it does show our approach of using the same form of critical region and test statistic is the most powerful approach for simple hypotheses.

\paragraph{Evaluation on t-tests.} Practitioners are often interested in more complex hypotheses - here we empirically evaluate the power of AIT in this setting. We consider a two-sample $t$-test with a composite hypothesis $H_0:\mu_1=\mu_2$ against $H_1:\mu_1 \neq \mu_2$, with total samples $T=200$. The two arms have Bernoulli reward with mean being $0.6$ and $0.4$, respectively. As we show in Table~\ref{tab:alginduced_vs_ART_TS}, our critical region construction yields better power than ART when applied to popular adaptive algorithms such as TS, $\epsilon$-greedy, and UCB. Moreover, we note that ART exhibits degenerate power for fully deterministic algorithms such as UCB~\citep{auer2002finite}. This is because it fixes the reward history while re-simulating arm selection; however, for a deterministic algorithm this is ineffective as it always results in identical histories for deterministic algorithms. 
In Appendix~Table~\ref{tab:more_art_ait_comp}, we evaluated it on more algorithms, and our AIT achieves consistently better power. Besides, even though we need to estimate the null, which can introduce estimation error, we find that AIT can control FPR reasonably well. As shown in Table~\ref{tab:t_test_classical_vs_AIT_fpr}, when conducting the test at the $95\%$ confidence level, the FPR is empirically controlled at $0.05 \pm 0.002$ for $H_0:\mu_1=\mu_2$ across null reward means ranging from $0.1$ to $0.9$.


\begin{table}[htbp]
\centering
\caption{
Power comparison between ART and AIT for data collected using different algorithms.
All tests are calibrated to have empirical FPR around $0.05$.
}
\label{tab:alginduced_vs_ART_TS}
\begin{tabular}{l cc cc}
\toprule
 & \multicolumn{2}{c}{\textbf{Power}} & \multicolumn{2}{c}{\textbf{FPR}} \\
\cmidrule(lr){2-3}\cmidrule(lr){4-5}
\textbf{Algorithm} & \textbf{ART} & \textbf{AIT} & \textbf{ART} & \textbf{AIT} \\
\midrule
TS              & 0.434  & 0.520  & 0.052 & 0.053 \\
$\epsilon$-greedy (0.1) & 0.443 & 0.490 & 0.057 & 0.057 \\
UCB             & 0.050  & 0.781  & 0.050 & 0.054 \\
\bottomrule
\end{tabular}
\end{table}

\section{Problem 2: Trading Off Reward and Statistical Power}
\label{sec:problem_2_reward_inference}

The approach discussed in Section~\ref{sec:problem_1_hypothesis_testing} provides experimenters with a way to conduct valid hypothesis testing for data collected under any given policy $\pi$.
We now turn to power analysis in the MAB setting, and to the problem of balancing reward maximization against the sample size required to achieve a desired level of statistical power.

To begin, we give a concrete definition of power analysis for MAB algorithms.
In contrast to classical power analysis, which typically evaluates power under a fixed alternative (e.g., a specified effect size), we adopt a distributional specification over reward parameters.
Specifically, we allow the experimenter to specify a prior distribution $D_{\vec{\nu}}$ over $\vec{\nu}$.
This choice serves two purposes.
First, such specifications are common in bandit modeling and are required to simulate rewards and evaluate average reward performance.
Second, the distribution of test statistics—and hence power—can be highly sensitive to the underlying reward parameters, making an explicit specification necessary for power evaluation in adaptive experiments.
We acknowledge this reliance on distributional assumptions as a limitation and an open challenge in MAB experimentation.

Power is then defined as
\[
\mathbb{P}(h_T \in C \mid D_{\vec{\nu}}, \pi, T).
\]

However, in many scientific, medical, and business applications, practitioners not only desire to adjust experiment length to the minimum necessary to achieve statistical power, but they also desire to allocate samples to inventions in a way that maximizes benefit and  minimizes undesirable outcomes.
A key difficulty is that these two goals are in some sense opposed to teach other: adaptive algorithms which are better at maximizing reward take much longer to reach the same level of statistical confidence. 
For example, in a two-armed binary-reward case with an arm difference of $0.2$, UR requires roughly $200$ samples to achieve $0.8$ power (fixing the Type~I error at $0.05$), whereas TS can require around $2000$ samples, despite getting much better average reward than UR (even in the first 200 samples). 
This highlight the need for a unified objective function to help practitioners navigate the trade-off between reward and horizon, and determine which algorithm best fits their setting.



\paragraph{Developing an objective function} For a given algorithm with $T$ samples required in the power analysis and cumulative reward $R$, we seek to construct an objective function $F(T, R, w)$ that takes as input $T$, $R$, and a single interpretable parameter $w$.  We refer to $w$ as the \textbf{experiment extension cost}, which captures the cost of involving an extra step or participant in the experiment in units of cumulative reward.\footnote{This is similar to ``cost of experimentation'' proposed in past work~\cite{bui11committing}, albeit in a different setting without statistical constraints.}  Higher values of $w$ mean extending the experiment is costly, and the user would prefer a shorter experiment, whereas lower values of $w$ mean extending an experiment is inexpensive, and improving reward is more important.    The chief challenge of specifying $w$ is in the conversion between reward units and cost to extend the experiment - in some business domains these may be in the same unit, but in scientific settings, this is often not the case. Thus, we provide tools and visualizations to help with this problem (Section~\ref{sec:choosing_w}).  


Given a user-specified $w$, we want this to represent the minimum \textbf{improvement} in cumulative reward needed to trade off against the cost of an additional experimental step.\footnote{For any particular problem, at some point it may not be possible for a bandit algorithm to achieve a $w$ increase in cumulative reward—at which point the horizon will dominate the objective. This is desirable, as it matches the intuition that we want a short experiment with maximal cumulative reward.} However, if an experiment is extended by one step without improving the mean reward ($= R/T$), then its cumulative reward would normally increase from $R$ to $R \cdot \frac{T+1}{T}$. We argue that this amount should not be viewed as an \emph{improvement} in reward, but rather as a no-improvement \emph{increase} in horizon, and should therefore not improve the value $F$. Otherwise, extending the horizon to achieve the same average reward would favor long, rather than short, experiments. Given this, we derive the partial derivative equations of the characteristic function that removes such effect. Assume we have two algorithms $\pi$ and $\pi'$, with horizons $T$ and $T+1$ and cumulative reward $R$ and $R'$, respectively. Then, restating the desired property of $F$ formally states that: \\ when $R' - \frac{T+1}{T}R = w$, then $F(T+1, R', w) \approx F(T, R, w)$. 
We can rewrite the former equation as  $R' - R - \frac{(T+1)-T}{T}R = ((T+1)-T)w$. 
If we let $\Delta R = R' - R$ and $\Delta T = (T+1)-T$, we can rewrite this as follows: We require 
 $F(T, R, w) \approx F(T+\Delta T, R + \Delta R, w)$ when
\begin{align}
\Delta R - \frac{R}{T}\,\Delta T = w\,\Delta T 
\label{eq:discrete_relationship}
\end{align}

Now we take a continuous and unbounded relaxation of $T$ and $R$, and convert the above analysis into a partial differential equation,  By the property of the total differential of a multivariable function, we rewrite $F(T+\Delta T, R + \Delta R, w)$ in a continuous form as: \\$dF\big|_{w = w_0}
= \frac{\partial F}{\partial T}\, dT
+ \frac{\partial F}{\partial R}\, dR .$
And equation~\eqref{eq:discrete_relationship} now becomes: \\$dR - \frac{R}{T}\, dT = w_0\, dT$. That is, we expect $dF = 0$ when \\$dR = (R/T + w_0)\, dT$. Substituting $dR$ into $dF$, we have:
$$
dF\big|_{w = w_0}
= \frac{\partial F}{\partial T}\, dT
+ \frac{\partial F}{\partial R}\, (R/T + w_0)\, dT = 0 .
$$
Factoring out the $dT$ term and observing that the remaining factor must be zero gives our main condition (equation~\eqref{eq:main_princ} below).

1. \textbf{(Main principle) Iso-value trade-off condition} For any fixed $w_0$, we expect $F(T, R, w_0)$ to encode the empirical tradeoff between $T$ and $R$ such that the following partial differential equation (PDE) is satisfied:
\begin{equation}
\frac{\partial F}{\partial T}
+ \frac{\partial F}{\partial R}\, (R/T + w_0) = 0 .
\label{eq:main_princ}
\end{equation}

\paragraph{Our proposed objective function} 
To this end, we propose the use of the following objective function to measure and compare the reward--inference tradeoff for different choices of algorithm $\pi$.  
\begin{equation}\label{eq:main_objective}
F(T,R,w) = R/T - w \cdot \log(T).
\end{equation}


It can be easily verified that our proposed function satisfies the partial equation mentioned earlier. Indeed, $\partial F/\partial T = -R/T^2 - w/T$ and $\partial F/\partial R = 1/T$, so
$\partial F/\partial T + (\partial F/\partial R)(R/T + w) = 0$.
We call our proposed objective function \textbf{experiment-cost-penalized reward} (ECP-reward), as it contains the commonly used average reward metric in the bandit literature and penalizes it with $\log$-scaled experiment step costs. Before giving an in-depth analysis of the proposed function, we try to give a quick interpretation of it.

The behavior of $F$ is fairly intuitive: When $w = 0$, the objective reduces to maximizing experiment mean reward, and when $w$ is very large,  the objective reduces to minimizing experiment steps.



Other than the main principle, our proposed formula also satisfies the following useful properties: 

2. \textbf{Monotonicity.} Fix $R$ and assume $w>0$. Then the objective score decreases monotonically as horizon increases: $\partial F / \partial T < 0$. Similarly, fixing $T$ and $w$, we have $\partial F / \partial R > 0$.  

3. \textbf{Consistency under reward location and scale shifting.} The relative ordering of two experiment scores remains unchanged if we shift the reward unit/basis by a constant or scale it by a positive factor. Formally, if $F(T_1,R_1,w) > F(T_2,R_2,w)$, then for any intercept $b \in \mathbb{R}$ and any multiplier $a>0$, we have $F(T_1,\, R_1 + b\,T_1,\, w) \;>\; F(T_2,\, R_2 + b\,T_2,\, w),$ and $F(T_1,\, a R_1,\, a w) \;>\; F(T_2,\, a R_2,\, a w).$\\



Lastly, one might wonder why we do not consider simpler objectives such as $R - wT$ or $R/T - wT$. Essentially, these objectives do not follow our main principle and can lead to unituitive evaluations. For further discussion, see Appendix~\ref{app:why_naive}.

Now that we have defined an objective, we can optimize for it, finding the best bandit algorithm (and parameters, if any) for the user's chosen $w$. An optimization procedure based on this objective is provided in Procedure~\ref{proc:optimization} in the Appendix.

\section{Simulation Method}\label{sec:sim_method}

In our simulation study,  we show how our testing correction method and objective function can optimize the experimental design while retaining statistical validity. We focus on simple, widely used algorithms and tests to ensure interpretability and practical relevance. 

\paragraph{Computational cost reduction}
Unlike a bandit simulation where computation is not a primary concern, power analysis for bandit algorithms can be computationally costly. It involves nested simulations, with an additional Monte Carlo loop to calibrate the test for each simulated experimental run (i.e., Procedure~\ref{proc:alg_induced_correction} is required within each simulation replication). We note that a similar issue exists for the ART correction proposed by \citet{ham2023art}.

To improve simulation efficiency, we employ several techniques, including vectorization, batched algorithm updates with a geometrically increasing batch size, result aggregation within each batch, and binning similar estimated null hypotheses for shared calibration. Empirically, this greatly reduces time - while a naive implementation of a single configuration with $N=1000$ (assuming additional $M=500$ without approximation) and horizon $T=200$ can take an hour to run and require roughly $20$~GB of memory, whereas our approach completes it in under one second with negligible memory usage. See Appendix~\ref{appendix:approximation} for more details.

\paragraph{Algorithms}
We select Thompson Sampling (TS) and Uniform Randomization as baselines. TS follows the classical Bayesian formulation: for Bernoulli rewards we use a Beta--Bernoulli model with a Beta$(1,1)$ prior, and for Gaussian rewards we adopt the conjugate Normal--Inverse-Gamma update \citep{thompson1933likelihood,gelman2013bayesian} with a non-informative prior. We also include $\epsilon$-Thompson Sampling ($\epsilon$-TS), which augments TS with fixed uniform exploration by following TS with probability $1-\epsilon$ and selecting an arm uniformly at random with probability $\epsilon$, where $\epsilon\in[0,1]$ controls the exploration rate.

\paragraph{Empirical study inspired simulation}
To illustrate our framework in a scientific setting, we first consider a six-armed bandit environment derived from a large-scale online educational experiment~\citep{Reza2023ExamEustress} in which psychologists investigated how brief, digitally delivered stress-reappraisal activities influence students’ learning and exam performance. 
The experiment showed different types of messages which varied the amount of explanatory text, modality (text/video), and the presence of reflection prompts.  

When simulating how bandit algorithms perform in this setting, we use two phases:  we first perform a pre-experiment optimization based on the objective function to determine the optimal algorithm parameter and horizon, similar to power analysis.
We then use the realized arm means from the empirical data in a post-experiment simulation to validate the practical performance of the recommended parameter choice.

In the pre-experiment phase, we scale exam scores to $[0,1]$ and assume the prior mean reward for each arm follows
$\mathrm{N}(0.81, 0.015^2)$. 
In each simulation for parameter optimization, the arm-level mean $\mu_i$ is drawn from this prior, and rewards follow
$\mathrm{N}(\mu_i, 0.1^2)$. Those values are estimated and rounded from the arm mean rewards, their average standard deviations and their prediction residuals, respectively from the original data used by \citet{Reza2023ExamEustress}.
For the hypothesis-testing framework, we conduct independent pairwise comparisons between each arm and a fixed control using two-sided $t$-tests, treating the first arm (corresponding to the pure text summary intervention) as the control.
We target a Type~I error rate of $0.05$, a Type~II error rate of $0.2$, and a minimum detectable effect size of $0.025$. 
We assume the experimenter specifies $w = 0.01$, which can be interpreted as valuing one additional experimental step
as costing roughly 1 exam point out of 100. 
This choice reflects the empirical finding from the original study that large treatment effects are unlikely, while even small improvements remain practically meaningful.


In the second phase,  we examine a possible scenario if the chosen algorithm was deployed with the chosen horizon. Instead of defining a prior distribution and sampling arm means, for this phase we use the actual arm means suggested by the data, $\vec{\mu}=$[0.810, 0.806, 0.819, 0.778, 0.827, 0.813], and maintain reward distribution as $\mathrm{N}(\mu_i, 0.1^2)$. We fix the experimental horizon using the pre-experiment analysis and assess performance by recomputing the achieved average reward and ECP-reward.
We highlight that this second simulation is illustrative, reflecting what may occur when the realized environment lies within the range specified by the prior.
We compare the resulting performance against baseline designs using TS and UR.

\paragraph{Additional simulation across tests}
We next apply our optimization procedure to a broader set of hypothesis tests in a three-armed binary-reward setting. 
Specifically, we consider ANOVA to assess whether mean rewards differ across arms in aggregate, pairwise t-tests (t-constant and t-control) to compare individual arms either against a fixed baseline or a designated control arm, and multiple-comparison procedures based on Tukey’s test to determine whether a single arm is distinctly superior. Detailed definitions of these tests are provided in Appendix~\ref{app:tests}.

In this setting, each arm’s true mean reward $p_i$ is drawn independently from a $\mathrm{Beta}(5,5)$ prior, and rewards follow $r_i \sim \mathrm{Bernoulli}(p_i)$. 
For each combination of bandit algorithm and parameter choice, results are averaged over $10{,}000$ replications.

\section{Result}
\label{sec:result}

\subsection{Empirical example inspired simulation}

\begin{figure}[ht]
    \centering
    \caption{
    Screenshot of our optimization framework web application, showing the relative ECP-reward performance
    for the empirical study inspired simulation. Note the best setting for $\epsilon$-TS outperforms TS and UR near the $w=0.01$.
    }\label{fig:EpsTS_ANOVA_objective_score}
    \includegraphics[width=\linewidth]{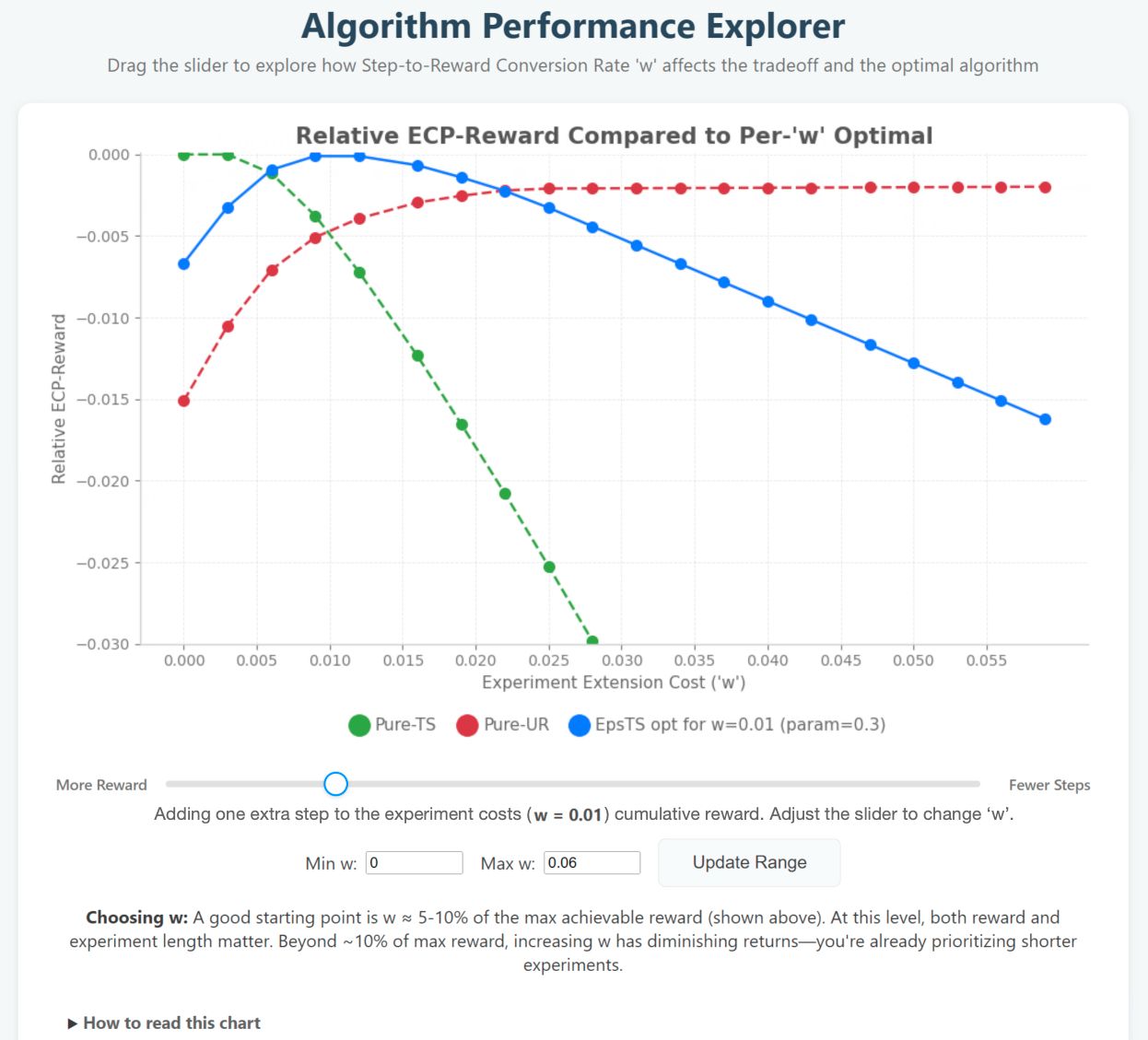}
\end{figure}

In this section, we present simulation results based on the empirical setting described in Section~\ref{sec:sim_method} and illustrate how experimenters can use our framework to guide their design of adaptive experimentation. Figure~\ref{fig:EpsTS_ANOVA_objective_score} illustrated the analysis visualizations displayed in our GUI output page. It allows the experimenter to identify the best algorithm and parameter (among what they considered), and compare its ECP-reward performance against UR and TS benchmark across various values of $w$. See Appendix~\ref{app:user_input} for an illustration of the user input page of our GUI. We now explain the improvement an experiment design can achieve after using our procedures.

\paragraph{Test correction.}
We first demonstrate that power analysis for MAB algorithms must account for test correction.
In our empirically inspired simulation, using an uncorrected two-sided $t$-test for each arm--control comparisons suggests that Thompson Sampling requires roughly $2{,}800$ participants to achieve power $0.8$, but results in an inflated false positive rate of $0.072$.
As shown in Table~\ref{tab:merged_prior_post}, correcting for algorithm-induced Type~I error using our AIT procedure increases the required sample size to over $4{,}000$ participants for the same target power, potentially altering experimenters’ design choices.

\begin{table}[t]
\centering
\small
\setlength{\tabcolsep}{2.5pt}
\renewcommand{\arraystretch}{1.05}
\caption{Comparison of experiment designs in an empirically inspired simulation with the \textbf{experiment extension cost} $w=0.01$.  Prior columns show expected performance under the prior specification at design time, while post columns report realized performance using empirical arm means. Underlined entries indicate values that can be corrected or optimized. Standard errors for all reward and ECP-reward values are below $0.0003$.}
\label{tab:merged_prior_post}
\begin{tabular}{l c c cc cc}
\toprule
 &  &  & \multicolumn{2}{c}{\textbf{Prior (design-time)}} & \multicolumn{2}{c}{\textbf{Post (realized)}} \\
\cmidrule(lr){4-5}\cmidrule(lr){6-7}
\textbf{Algorithm/Design} & \textbf{FPR} & \textbf{Steps} & \textbf{Reward} & \textbf{ECP} & \textbf{Reward} & \textbf{ECP} \\
\midrule
UR (naive) & 0.050 & 906 & 0.8100 & \underline{0.7419} & 0.8053 & \underline{0.7372} \\
\addlinespace[1pt]
TS (naive) & \underline{0.072} & 2{,}767 & / & / & / & / \\
\addlinespace[1pt]
TS (AIT) & 0.050 & 4{,}186 & 0.8251 & \underline{0.7417} & 0.8255 & \underline{0.7421} \\
\addlinespace[1pt]
$\varepsilon$-TS (AIT + Opt.) & 0.050 & 1{,}338 & 0.8185 & 0.7465 & 0.8162 & 0.7443 \\
\bottomrule
\end{tabular}
\end{table}

\paragraph{Optimization on experiment design.}
Next, we compare naive experimental designs (e.g., using the traditional TS or UR algorithms) with designs produced using our optimization procedure.  Figure~\ref{fig:EpsTS_ANOVA_objective_score} visualizes the relative ECP-reward of algorithms within the $\varepsilon$-TS family.
For each value of the \textbf{experiment extension cost} $w$, we define the relative ECP-reward of an algorithm $\pi$ as
$F(T_\pi, r_\pi, w)- F(T_{\pi^*}, r_{\pi^*}, w)$,
where $\pi^*$ denotes the best-performing algorithm among the candidates under consideration. In our setting, this corresponds to the optimal choice of $\epsilon$ within the $\varepsilon$-TS family, and a relative ECP-reward of $0$ indicates the optimal design.


The interface allows experimenters to specify $w$ via a slider, highlights the corresponding optimal $\varepsilon$-TS variant, and displays its relative ECP-reward across other values of $w$, together with the UR and TS benchmarks for comparison.

\paragraph{How experimenters can use our GUI to inform design}
\label{sec:choosing_w}
The TS and UR benchmarks help experimenters make core design decisions by clarifying how the relative importance of experimental cost affects algorithm choice.
Figure~\ref{fig:EpsTS_ANOVA_objective_score} shows that TS (corresponding to $\varepsilon=0$) is optimal only when $w$ is very small, indicating a regime in which experimental cost is effectively negligible and the objective reduces to near–pure reward maximization.
As $w$ increases, the relative ECP-reward of TS drops rapidly, signaling diminishing suitability when experimental cost becomes non-negligible.

For larger values of $w$, minimizing the total number of experimental steps becomes the dominant consideration.
In this region, UR performs comparatively well, although it is not always strictly optimal.
In our setting, for example, $\varepsilon$-TS with $\varepsilon=0.8$ achieves a slightly lower total sample size, while the ECP-reward of UR remains high, indicating that modest deviations from uniform sampling toward better-performing arms can sometimes yield hypothesis testing efficiency gains.

In this intermediate regime, hybrid $\varepsilon$-TS algorithms that balance reward maximization and exploration achieve better trade-offs.
The visualization allows experimenters to identify the optimal $\varepsilon$ for a given $w$ and to assess the robustness of that choice across nearby values.
For example, at $w=0.01$, the optimal variant is $\varepsilon$-TS$(0.3)$, which outperforms both TS and UR over a neighborhood of $w=0.01$.
As shown in Table~\ref{tab:merged_prior_post}, $\varepsilon$-TS$(0.3)$ achieves an ECP-reward of $0.7465$, approximately $0.4$ higher than either TS or UR, at the cost of roughly $400$ additional participants compared to UR while yielding an average reward increase of about $0.8$ points.

Overall, the visualization guides algorithm selection by showing when each design is preferable and how much value it provides.

\paragraph{Post-Experiment Evaluation.}
We next evaluate the optimized design using the realized experimental data.
Based on the pre-experiment analysis, we select $\varepsilon = 0.3$ for the $\varepsilon$-TS family when $w = 0.01$, with the total number of experimental steps fixed from the pre-analysis stage.
Using bootstrap resampling of the empirical data, we update the achieved average reward and recompute the corresponding ECP-reward, comparing performance against TS and UR.

As shown in Table~\ref{tab:merged_prior_post}, the optimized $\varepsilon$-TS$(0.3)$ design maintains a higher ECP-reward than both TS and UR.  Our optimized choice took 2,848 fewer steps to reach the desired Type~II error than TS, while achieving 0.0109 more average reward than UR, choosing an algorithm that better optimized the desired tradeoff. 
We note that, as is common in power analysis, the realized performance depends on how closely the true environment matches the prior assumptions, and the recommended parameter is not guaranteed to be optimal in every instance.

\subsection{Additional Evaluation Across Tests}\label{sec:result_objective_function}

In this section, we present simulation studies across a range of hypothesis tests and bandit algorithms.
We show that Procedure~\ref{proc:optimization} can improve upon naive experiment designs and that it performs robustly across a wide variety of tests.
Throughout, we fix the Type~I error rate at $0.05$ and the power constraint at $0.8$ for all tests, and assume a experiment extension cost $w$ of $0.1$.
According to our definition to $w$ in Section~\ref{sec:problem_2_reward_inference}, this choice reflects the interpretation that one additional step of the experiment  costs 0.1 in additional cumulative reward, which is a realistic choice.

As illustrated in Table~\ref{tab:objective_scores_framework}, compared to traditional fixed designs such as TS, UR, or $\varepsilon$-TS with $\varepsilon=0.5$, applying our optimization procedure can improve the achievable ECP-reward.
In particular, we observe relatively large improvements over TS and UR, exceeding $0.06$ on average. 

More importantly, each fixed choice of algorithm can exhibit substantial suboptimality in certain scenarios.
For example, UR and $\varepsilon$-TS$(0.5)$ perform poorly under the Tukey or T-Constant test, while TS shows particularly low ECP-reward for the (two-sided) T-control test.
These results highlight the importance of explicitly specifying the reward--inference tradeoff and systematically performing algorithm and parameter selection.

\begin{table}[htbp]
\centering
\small
\setlength{\tabcolsep}{7pt}
\caption{
ECP-reward comparison (higher is better) across fixed experiment designs (TS, UR, and $\varepsilon$-TS with $\varepsilon=0.5$) versus our optimized design over the $\varepsilon$-TS family using Procedure~\ref{proc:optimization}.
}
\label{tab:objective_scores_framework}

\begin{tabular}{lcccc}
\toprule
 & ANOVA 
 & \makecell{T-Constant} 
 & \makecell{T-Control} 
 & \makecell{Tukey} \\
\midrule
Naive UR              
& -0.052 & -0.012 & -0.077 & -0.112 \\
Naive TS              
& -0.079 & 0.075 & -0.163 & -0.016 \\
Naive $\varepsilon$-TS(0.5)     
& -0.012 & 0.042 & -0.046 & -0.021 \\
\midrule
Optimized $\varepsilon$-TS   
& -0.009 & 0.075 & -0.042 & 0.012 \\
(Best $\epsilon$)     
& (0.4) & (0) & (0.4) & (0.2) \\
\bottomrule
\end{tabular}
\end{table}

\subsection{Prior mis-specification sensitivity check}\label{sec:result_objective_function_sensitive}

\begin{table}[t]
\centering
\small
\setlength{\tabcolsep}{3.5pt}
\caption{
ECP-reward loss relative to the true optimal per setting.
\textbf{Mis-Opt.}: loss from optimizing under a mis-specified prior;
\textbf{Rand. Baseline}: loss averaged over random picks on $\epsilon$ for $\epsilon$-TS.
}
\label{tab:prior_misspec}
\begin{tabular}{lccccccc}
\toprule
\multicolumn{8}{c}{\textbf{Prior location mis-specification (fixed scale = 0.15)}} \\
\addlinespace[1pt]
True location & 0.20 & 0.25 & 0.30 & \textbf{0.35} & 0.40 & 0.45 & 0.50 \\
\specialrule{0.25pt}{1pt}{1pt}
Mis-Opt. & 0.047 & 0.024 & 0.001 & 0.000 & 0.002 & 0.000 & 0.001 \\
Rand. Baseline & 0.083 & 0.050 & 0.018 & 0.017 & 0.019 & 0.018 & 0.019 \\
\specialrule{0.35pt}{2pt}{2pt}
\specialrule{0.35pt}{2pt}{2pt}
\multicolumn{8}{c}{\textbf{Prior scale mis-specification (fixed location = 0.35)}} \\
\addlinespace[1pt]
True scale & 0.09 & 0.11 & 0.13 & \textbf{0.15} & 0.17 & 0.19 & 0.21 \\
\specialrule{0.25pt}{1pt}{1pt}
Mis-Opt. & 0.010 & 0.004 & 0.000 & 0.000 & 0.003 & 0.005 & 0.005 \\
Rand. Baseline & 0.019 & 0.017 & 0.016 & 0.017 & 0.021 & 0.025 & 0.037 \\
\bottomrule
\end{tabular}
\end{table}

When the location and/or scale of the prior distribution over arm mean rewards is inaccurately specified, the optimization outcome may degrade. To assess the robustness of our experiment design framework to prior mis-specification, we evaluate how the objective score changes under alternative prior specifications.

We study both location and scale mis-specification for the $\epsilon$-TS design under an ANOVA test with Type~I error bounded by $0.05$ and power constraint $0.8$. Throughout, the algorithm is optimized assuming a $Beta(3.2,5.9)$ prior on each arm mean reward $\mu_i$ (mean $0.35$, standard deviation $0.15$), with the experiment extension cost $w=0.1$. For location mis-specification, the true prior mean varies while the scale is fixed; for scale mis-specification, the true prior standard deviation varies from $0.09$ to $0.21$ with the mean fixed.

Table~\ref{tab:prior_misspec} reports ECP-reward loss relative to the true optimal at each setting. We include a random-$\epsilon$ baseline to provide a reference scale. Relative to this baseline, the loss induced by prior mis-specification is small across a wide range of mean and scale deviations. The loss is largest when the true prior deviates most from the reference specification, such as at true location $0.2$ or true scale $0.09$, where the relative mismatch (ratio) is the most extreme. Overall, it indicates that the proposed optimization framework yields $\epsilon$ values that are robust to moderate prior uncertainty.

\section{Discussion and Conclusion}\label{sec:discussion}


In this work, we provide practical and justifiable approaches to help practitioners design statistically reliable and cost-effective adaptive experiments, balancing the trade-off between cumulative reward and experimental deployment cost. Our approach consists of two main components. First, we introduce a valid hypothesis-testing correction method for adaptively collected data that can be easily generalized to commonly used tests such as ANOVA, Tukey, and the $t$-test. Compared to existing general-purpose adaptations, our approach has theoretically justifiable power efficiency and is empirically superior. 
Second, we propose a principled objective-function design that allows practitioners to evaluate two seemingly incompatible metrics—reward and hypothesis-testing efficiency—on a unified and at least somewhat interpretable scale. 
As we showed in Tables \ref{tab:merged_prior_post}, this can result in choosing an algorithm that achieves much better reward than UR while reducing the number of steps compared to TS significantly, therefore better optimizing the user's intended tradeoff.
Finally, we implement these methods into a unified software toolkit, offering practitioners a practical, one-stop solution for setting up and evaluating adaptive experimental designs, with the access to their familiar tests and algorithms.

\subsection{Limitation and future directions}
We discuss limitations and future directions below.

\paragraph{Usability}
While we presented interfaces and tools that allow users to explore parameter choices, we acknowledge that these require training to use effectively. One important direction of future work is to study which methods of training and interface design make these interfaces the most accessible to practitioners, which would likely involve user studies. Nevertheless, designing the objective scores, framework, and toolkit is an important first step to making statistically-sound adaptive experimentation more accessible to a broader community of scientists and experimenters.

\paragraph{Theoretical analysis}
While our framework produces recommended parameter choices that are approximately optimal, one limitation is that we did not show formal regret bounds (with respect to the ECP score) on the performance of these algorithms. One challenge in doing this is that a traditional worst-case analysis is not wholly appropriate, as performance is expected to be related to the prior distribution specified by the user. As such, as Bayesian regret bound~\cite{osband2013more} may be more appropriate.  One benefit of such analysis is that it would require a tight characterization of the estimation error introduced by our approximations.

\paragraph{Bayesian hypothesis testing}

In this paper, we have assumed classical statistical tests and classical power analysis. It would be valuable to extend our extend our analysis to the realm of Bayesian statistics.


\bibliographystyle{plainnat}   
\bibliography{reference/reference}

@book{casella2002statistical, title={Statistical Inference}, author={Casella, George and Berger, Roger L}, edition={2}, publisher={Duxbury Press}, year={2002}}

@book{cohen1988statistical,
  title        = {Statistical Power Analysis for the Behavioral Sciences},
  author       = {Cohen, Jacob},
  edition      = {2},
  year         = {1988},
  publisher    = {Lawrence Erlbaum Associates},
  address      = {Hillsdale, NJ}
}

@article{faul2009gpower,
  title   = {Statistical power analyses using {G*Power} 3.1: Tests for correlation and regression analyses},
  author  = {Faul, Franz and Erdfelder, Edgar and Buchner, Axel and Lang, Albert-Georg},
  journal = {Behavior Research Methods},
  volume  = {41},
  number  = {4},
  pages   = {1149--1160},
  year    = {2009}
}

@Book{lattimore2020bandit,
  author    = {Lattimore, Tor and Szepesvári, Csaba},
  title     = {Bandit Algorithms},
  publisher = {Cambridge University Press},
  year      = {2020}
}

@article{austrian2021cds,
  author = {Austrian, Jonathan and Mendoza, Felicia and Szerencsy, Adam and Fenelon, Lucille and Horwitz, Leora I. and Jones, Simon and Kuznetsova, Masha and Mann, Devin M.},
  title = {Applying A/B Testing to Clinical Decision Support: Rapid Randomized Controlled Trials},
  journal = {Journal of Medical Internet Research},
  year = {2021},
  volume = {23},
  number = {4},
  pages = {e16651},
  doi = {10.2196/16651}
}

@article{sales2023auxiliary,
  author = {Sales, Adam C. and Prihar, Ethan B. and Gagnon-Bartsch, Johann A. and Heffernan, Neil T.},
  title = {Using Auxiliary Data to Boost Precision in the Analysis of A/B Tests on an Online Educational Platform: New Data and New Results},
  journal = {Journal of Educational Data Mining},
  year = {2023},
  volume = {15},
  number = {2},
  pages = {53--85},
  doi = {10.5281/zenodo.8016854}
}

@InProceedings{liu2014trading,
  author    = {Liu, Yun-En and Mandel, Travis and Brunskill, Emma and Popović, Zoran},
  title     = {Trading Off Scientific Knowledge and User Learning with Multi-Armed Bandits},
  booktitle = {Proceedings of the Educational Data Mining Conference (EDM) 2014},
year      = {2014}
}

@Article{thompson1933likelihood,
  author    = {Thompson, William R.},
  title     = {On the Likelihood that One Unknown Probability Exceeds Another in View of the Evidence of Two Samples},
  journal   = {Biometrika},
  volume    = {25},
  number    = {3/4},
  pages     = {285--294},
  year      = {1933}
}

@article{russo2018tutorial, title={A tutorial on Thompson sampling}, author={Russo, Daniel J and Van Roy, Benjamin and Kazerouni, Abbas and Osband, Ian and Wen, Zheng}, journal={Foundations and Trends in Machine Learning}, volume={11}, number={1}, pages={1--96}, year={2018}}

@InProceedings{chapelle2011empirical,
  author    = {Chapelle, Olivier and Li, Lihong},
  title     = {An Empirical Evaluation of Thompson Sampling},
  booktitle = {Advances in Neural Information Processing Systems (NeurIPS) 2011},
year      = {2011},
  url       = {https://papers.nips.cc/paper/2011/file/9e082b752ed5630062c1b4b877e5f38b-Paper.pdf}
}

@InProceedings{erraqabi2017trading,
  author    = {Erraqabi, Akram and Lazaric, Alessandro and Valko, Michal and Brunskill, Emma and Liu, Yun-En},
  title     = {Trading off Rewards and Errors in Multi-Armed Bandits},
  booktitle = {Artificial Intelligence and Statistics (AISTATS) 2017},
year      = {2017},
  publisher = {PMLR}
}

@article{smith2018bayesian, title={Bayesian adaptive bandit-based designs using the Gittins index for multi-armed trials with normally distributed endpoints}, author={Smith, Adam L and Villar, Sofia S}, journal={Journal of Applied Statistics}, volume={45}, number={6}, pages={1052--1076}, year={2018}}

@Article{villar2015multiarmed,
  author    = {Villar, Sofía S. and Bowden, Jack and Wason, James},
  title     = {Multi-Armed Bandit Models for the Optimal Design of Clinical Trials: Benefits and Challenges},
  journal   = {Statistical Science},
  volume    = {30},
  number    = {2},
  pages     = {199--215},
  year      = {2015},
  doi       = {10.1214/14-STS504}
}

@article{williams2021challenges,
  title        = {Challenges in Statistical Analysis of Data Collected by a Bandit Algorithm: An Empirical Exploration in Applications to Adaptively Randomized Experiments},
  author       = {Williams, Joseph J. and Nogas, Jacob and Deliu, Nina and Shaikh, Hammad and Villar, Sofia S. and Durand, Audrey and Rafferty, Anna N.},
  journal      = {arXiv preprint arXiv:2103.12198},
  year         = {2021},
  url          = {https://arxiv.org/pdf/2103.12198}
}

@Article{deliu2021efficient,
  author    = {Deliu, Nina and Williams, Joseph J. and Villar, Sofía S.},
  title     = {Efficient Inference Without Trading-off Regret in Bandits: An Allocation Probability Test for Thompson Sampling},
  journal   = {arXiv preprint},
  volume    = {arXiv:2111.00137},
  year      = {2021},
  url       = {https://arxiv.org/abs/2111.00137}
}

@article{hadad2021confidence,
  author    = {Hadad, Vitor and Hirshberg, David A. and Zhan, Ruohan and Wager, Stefan and Athey, Susan},
  title     = {Confidence Intervals for Policy Evaluation in Adaptive Experiments},
  journal   = {Proceedings of the National Academy of Sciences},
  volume    = {118},
  number    = {15},
  pages     = {e2014602118},
  year      = {2021},
  doi       = {10.1073/pnas.2014602118}
}

@InProceedings{deshpande2018accurate,
  author    = {Deshpande, Yash and Mackey, Lester and Syrgkanis, Vasilis and Taddy, Matt},
  title     = {Accurate Inference for Adaptive Linear Models},
  booktitle = {Proceedings of the 35th International Conference on Machine Learning (ICML) 2018},
year      = {2018},
  publisher = {PMLR}
}

@article{auer2002finite,
  title        = {Finite‐time Analysis of the Multiarmed Bandit Problem},
  author       = {Auer, Peter and Cesa‐Bianchi, Nicol\`o and Fischer, Paul},
  journal      = {Machine Learning},
  volume       = {47},
  number       = {2-3},
  pages        = {235--256},
  year         = {2002},
  doi          = {10.1023/A:1013689704352},
  url          = {https://doi.org/10.1023/A:1013689704352}
}

@inproceedings{yao2021powerconstrained,
  author    = {Yao, Jiayu and Brunskill, Emma and Pan, Weiwei and Murphy, Susan A. and Doshi-Velez, Finale},
  title     = {Power Constrained Bandits},
  booktitle = {Proceedings of the 6th Machine Learning for Healthcare Conference (MLHC)},
  volume    = {149},
year      = {2021},
  publisher = {PMLR}
}

@article{ham2023art,
  title={Hypothesis Testing in Adaptively Sampled Data: {ART} to Maximize Power Beyond iid Sampling},
  author={Ham, Dae Woong and Qiu, Jiaze},
  journal={TEST},
  volume={32},
  number={3},
  pages={998--1037},
  year={2023},
  doi={10.1007/s11749-023-00861-2}
}

@article{simchi-levi2023mabdesign,
  title={Multi-Armed Bandit Experimental Design: Online Decision-Making and Adaptive Inference},
  author={Simchi-Levi, David and Wang, Chonghuan},
  journal={Management Science},
  volume={71},
  number={6},
  pages={4828--4846},
  year={2025},
  doi={10.1287/mnsc.2023.01649}
}

@article{zhong2023achieving,
  title={Achieving the Pareto Frontier of Regret Minimization and Best Arm Identification in Multi-Armed Bandits},
  author={Zhong, Zixin and Cheung, Wang Chi and Tan, Vincent Y. F.},
  journal={Transactions on Machine Learning Research},
  year={2023},
  url={https://openreview.net/forum?id=XXfEmIMJDm}
}

@inproceedings{xiang2022mabvsab,
  title = {Multi-Armed Bandit vs. A/B Tests in E-Commerce: Confidence Interval and Hypothesis Test Power Perspectives},
  author = {Xiang, Ding and West, Rebecca and Wang, Jiaqi and Cui, Xiquan and Huang, Jinzhou},
  booktitle = {Proceedings of the 28th ACM SIGKDD Conference on Knowledge Discovery and Data Mining (KDD 2022)},
year = {2022},
  publisher = {ACM},
  doi = {10.1145/3534678.3539144},
  url = {https://dl.acm.org/doi/10.1145/3534678.3539144}
}

@inproceedings{xu2021unified, title={A unified framework for bandit multiple testing}, author={Xu, Ziyu and Wang, Ruodu and Ramdas, Aaditya}, booktitle={Advances in Neural Information Processing Systems}, volume={34}, pages={16833--16845}, year={2021}}

@book{fisher1925statistical,
  title     = {Statistical Methods for Research Workers},
  author    = {Fisher, Ronald A.},
  year      = {1925},
  publisher = {Oliver and Boyd},
  address   = {Edinburgh, UK},
  edition   = {1st},
}

@article{student1908probable,
  author  = {Student},
  title   = {The probable error of a mean},
  journal = {Biometrika},
  volume  = {6},
  number  = {1},
  pages   = {1--25},
  year    = {1908},
  doi     = {10.1093/biomet/6.1.1},
}

@article{tukey1949comparing,
  author  = {Tukey, John W.},
  title   = {Comparing individual means in the analysis of variance},
  journal = {Biometrics},
  volume  = {5},
  number  = {2},
  pages   = {99--114},
  year    = {1949},
  publisher = {International Biometric Society},
  doi     = {10.2307/3001913},
}

@book{gelman2013bayesian,
  title={Bayesian Data Analysis},
  author={Gelman, Andrew and Carlin, John B and Stern, Hal S and Dunson, David B and Vehtari, Aki and Rubin, Donald B},
  year={2013},
  publisher={CRC Press}
}

@inproceedings{Reza2023ExamEustress,
  author    = {Mohi Reza and Angela Zavaleta Bernuy and Emmy Liu and Tong Li and Zhongyuan Liang and Calista K. Barber and Joseph Jay Williams},
  title     = {Exam Eustress: Designing Brief Online Interventions for Helping Students Identify Positive Aspects of Stress},
  booktitle = {Proceedings of the 2023 CHI Conference on Human Factors in Computing Systems},
  year      = {2023},
doi       = {10.1145/3544548.3581368}
}

@inproceedings{kong2024sequential,
  title     = {Sequential Optimum Test with Multi-armed Bandits for Online Experimentation},
  author    = {Kong, Fang and Zhao, Penglei and Han, Shichao and Wang, Yong and Li, Shuai},
  booktitle = {Proceedings of the 33rd ACM International Conference on Information and Knowledge Management (CIKM)},
  year      = {2024},
publisher = {ACM},
  doi       = {10.1145/3627673.3680040}
}

@book{rosenberger2016randomization,
  title={Randomization in Clinical Trials: Theory and Practice},
  author={Rosenberger, William F. and Lachin, John M.},
  year={2016},
  publisher={Wiley}
}

@article{osband2013more,
  title={(More) efficient reinforcement learning via posterior sampling},
  author={Osband, Ian and Russo, Daniel and Van Roy, Benjamin},
  journal={Advances in Neural Information Processing Systems},
  volume={26},
  year={2013}
}

@InProceedings{bui11committing,
  title     = {Committing Bandits},
  author    = {Bui, Loc and Johari, Ramesh and Mannor, Shie},
  booktitle = {Advances in Neural Information Processing Systems (NeurIPS) 2011},
  year      = {2011},
  url       = {https://papers.nips.cc/paper_files/paper/2011/file/d56b9fc4b0f1be8871f5e1c40c0067e7-Paper.pdf}
}

@Article{kanarios24cabai,
  title     = {Cost Aware Best Arm Identification},
  author    = {Kanarios, Kellen and Zhang, Qining and Ying, Lei},
  journal   = {Reinforcement Learning Journal},
  volume    = {4},
  pages     = {1533--1545},
  year      = {2024}
}

@inproceedings{xia2015budgeted,
  title   = {Thompson Sampling for Budgeted Multi-Armed Bandits},
  author  = {Xia, Yingce and Li, Haifang and Qin, Tao and Yu, Nenghai and Liu, Tie-Yan},
  booktitle = {Proceedings of the 24th International Joint Conference on Artificial Intelligence (IJCAI)},
year    = {2015}
}

@book{zar2010biostatistical,
  title={Biostatistical Analysis},
  author={Zar, Jerrold H},
  year={2010},
  edition={5},
  publisher={Pearson Prentice-Hall},
  address={Upper Saddle River, NJ}
}

\newpage
\appendix

\appendix
\appendix
\appendix
\section{Proof of Theorem~\ref{thm:ait_lrt_optimal}}
\label{app:proof_ait_lrt_optimal}
The proof proceeds by first identifying key structural properties of the
multi-armed bandit (MAB) data-generating process, and then applying the
Neyman--Pearson lemma to the induced distribution over experiment histories.
\paragraph{Properties of the MAB Data-Generating Process.}
We assume the standard MAB process: at each time~$t$, the algorithm~$\pi$
selects an arm $a_t$ based on the observed history~$h_{t-1}$, after which a reward
$r_t$ is drawn from the distribution associated with arm~$a_t$ under~$\vec{\nu}$.
Under this model, the following properties hold.

\begin{property}[Conditional Independence of Rewards]
\label{prop:reward_independence}
Given the selected arm~$a_t$ and the reward distribution specification~$\vec{\nu}$,
the reward~$r_t$ is conditionally independent of the past history~$h_{t-1}$ and the
data-collection algorithm~$\pi$. That is,
\[
p(r_t \mid a_t, h_{t-1}, \vec{\nu}, \pi)
=
p(r_t \mid a_t, \vec{\nu}).
\]
\end{property}

\begin{property}[Conditional Independence of Actions]
\label{prop:action_independence}
Given the past history~$h_{t-1}$ and the data-collection algorithm~$\pi$, the action~$a_t$
is conditionally independent of the reward distribution specification~$\vec{\nu}$. That is,
\[
p(a_t \mid h_{t-1}, \vec{\nu}, \pi)
=
p(a_t \mid h_{t-1}, \pi).
\]
\end{property}

\medskip
\noindent
The next lemma combines these two properties to show that, for simple hypotheses,
the likelihood ratio over complete histories $h_T$ has a particularly simple form.
In particular, it is invariant to the choice of data-collection algorithm $\pi$,
as long as the likelihood ratio is well defined (i.e., $p(h_T \mid \vec{\nu}_i,\pi)>0$
for $i\in\{0,1\}$).

\begin{lemma}[Algorithm Invariance of the Likelihood Ratio]
\label{lem:LR}
Let $\pi$ be any data-collection algorithm such that
$p(h_T \mid \vec{\nu}_0, \pi) > 0$ and $p(h_T \mid \vec{\nu}_1, \pi) > 0$.
Let
\[
h_T = (a_1, r_1, \ldots, a_T, r_T)
\]
denote an ordered experiment history. Then the likelihood ratio satisfies
\[
\frac{p(h_T \mid \vec{\nu}_1, \pi)}{p(h_T \mid \vec{\nu}_0, \pi)}
=
\frac{
\prod_{t=1}^T p(r_t \mid a_t, \vec{\nu}_1)
}{
\prod_{t=1}^T p(r_t \mid a_t, \vec{\nu}_0)
},
\]
and therefore does not depend on the data-collection algorithm~$\pi$. Moreover, the
statistic depends on~$h_T$ only through the collection of rewards observed under each
arm and is invariant to permutations of the time indices.
\end{lemma}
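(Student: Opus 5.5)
The plan is to factor the joint density of the history by the chain rule, apply the two structural properties at each step, and cancel the action terms between numerator and denominator.

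\textbf{Step 1 (factorization).} For $i\in\{0,1\}$, condition sequentially on $h_{t-1}$ (with $h_0$ empty) and write
\[
p(h_T\mid\vec{\nu}_i,\pi)=\prod_{t=1}^T p(a_t\mid h_{t-1},\vec{\nu}_i,\pi)\,p(r_t\mid a_t,h_{t-1},\vec{\nu}_i,\pi).
\]
Here $p$ is a density or mass with respect to a common dominating measure for the rewards, and counting measure for the arms.

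\textbf{Step 2 (apply the properties).} By Property~\ref{prop:action_independence}, each action factor equals $p(a_t\mid h_{t-1},\pi)$, which does not depend on $i$. By Property~\ref{prop:reward_independence}, each reward factor equals $p(r_t\mid a_t,\vec{\nu}_i)$. Hence
\[
p(h_T\mid\vec{\nu}_i,\pi)=\Big(\prod_{t} p(a_t\mid h_{t-1},\pi)\Big)\Big(\prod_t p(r_t\mid a_t,\vec{\nu}_i)\Big).
\]

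\textbf{Step 3 (cancel).} The hypothesis $p(h_T\mid\vec{\nu}_0,\pi)>0$ forces every action factor to be strictly positive, since the reward product is finite. So in the ratio the common factor $\prod_t p(a_t\mid h_{t-1},\pi)$ cancels, and both reward products are nonzero. This gives the stated formula, which visibly does not involve $\pi$. The positivity assumption is used exactly here, to justify dividing by the shared action factor.

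\textbf{Step 4 (permutation invariance).} Group the factors in the reward product by arm:
\[
\prod_t p(r_t\mid a_t,\vec{\nu}_i)=\prod_{a=1}^K\prod_{t:a_t=a}p(r_t\mid\nu_{i,a}).
\]
Since the rewards for arm $a$ are evaluated under the fixed density of $\nu_{i,a}$, this expression depends only on the multiset of rewards observed under each arm. It is therefore invariant to any relabeling of the time indices.

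The main obstacle is measure-theoretic rigor in Step 1. When rewards are continuous, the "density of a history" must be defined with respect to a product dominating measure, and the algorithm may be randomized, with action probabilities given by a kernel. I would handle this by fixing a $\sigma$-finite measure $\lambda$ that dominates all $\nu_{i,a}$ and taking the density of $h_T$ with respect to $(\text{counting}\times\lambda)^{\otimes T}$. The Ionescu–Tulcea construction of the process then yields exactly the product in Step 1, and the remaining steps are algebraic.
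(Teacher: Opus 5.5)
Your proposal is correct and follows the paper's proof essentially step for step. Like the paper, you factor $p(h_T\mid\vec{\nu},\pi)$ by the chain rule, use Properties~\ref{prop:reward_independence} and~\ref{prop:action_independence} to remove $\pi$ from the reward factors and $\vec{\nu}$ from the action factors, and then cancel the shared action terms. Your Step~3 positivity argument and your Step~4 grouping-by-arm argument make explicit what the paper leaves implicit; in particular, the paper never separately proves the permutation-invariance clause.
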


\noindent\emph{Proof.}
By the chain rule of probability,
\[
p(h_T \mid \vec{\nu}, \pi)
=
\prod_{t=1}^T
p(a_t \mid h_{t-1}, \vec{\nu}, \pi)\,
p(r_t \mid a_t, h_{t-1}, \vec{\nu}, \pi).
\]
By Property~\ref{prop:reward_independence} and Property~\ref{prop:action_independence},
this expression simplifies to
\[
p(h_T \mid \vec{\nu}, \pi)
=
\prod_{t=1}^T
p(a_t \mid h_{t-1}, \pi)\,
p(r_t \mid a_t, \vec{\nu}).
\]
Taking the likelihood ratio under $\vec{\nu}_1$ and $\vec{\nu}_0$, the action-selection
terms cancel, yielding
\[
\frac{p(h_T \mid \vec{\nu}_1, \pi)}{p(h_T \mid \vec{\nu}_0, \pi)}
=
\frac{
\prod_{t=1}^T p(r_t \mid a_t, \vec{\nu}_1)
}{
\prod_{t=1}^T p(r_t \mid a_t, \vec{\nu}_0)
}.
\]
\hfill $\square$

\medskip
\noindent
We now complete the proof of Theorem~\ref{thm:ait_lrt_optimal}.
For any fixed algorithm $\pi$, the simple hypotheses $\vec{\nu}_0$ and $\vec{\nu}_1$
induce two fully specified distributions on the sample space $\mathcal{H}_T$, namely
$p(\cdot \mid \vec{\nu}_0,\pi)$ and $p(\cdot \mid \vec{\nu}_1,\pi)$.
By the Neyman--Pearson lemma, the most powerful level-$\alpha$ test for discriminating
between these two distributions rejects for large values of the likelihood ratio
$\frac{p(h_T \mid \vec{\nu}_1,\pi)}{p(h_T \mid \vec{\nu}_0,\pi)}$.

Concretely, define the (classical) LRT rejection region formed from the conditional reward
distributions,
\[
C_k
\;:=\;
\left\{
h_T :
\frac{
\prod_{t=1}^T p(r_t \mid a_t, \vec{\nu}_1)
}{
\prod_{t=1}^T p(r_t \mid a_t, \vec{\nu}_0)
}
> k
\right\}.
\]
Applying AIT amounts to calibrating the threshold $k = k(\alpha,\pi)$ so that
\[
\Pr\!\left( h_T \in C_{k(\alpha,\pi)} \mid \vec{\nu}_0,\pi \right) \le \alpha,
\]
with equality when attainable.
By Lemma~\ref{lem:LR}, this is equivalent to using the rejection region
\[
C
=
\left\{
h_T :
\frac{p(h_T \mid \vec{\nu}_1, \pi)}{p(h_T \mid \vec{\nu}_0, \pi)}
> k(\alpha,\pi)
\right\}.
\]
Therefore, by the Neyman--Pearson lemma, $C$ is the most powerful level-$\alpha$ test for
testing the simple null hypothesis $\vec{\nu}_0$ against the simple alternative $\vec{\nu}_1$
under data collection by $\pi$.\footnote{Although we express the test in terms of the experiment
history $h_T$, this does not affect the application of the Neyman--Pearson lemma. For a fixed
algorithm $\pi$, the induced distribution over $h_T$ is fully specified under both $\vec{\nu}_0$
and $\vec{\nu}_1$, and the likelihood ratio is well defined.}
This proves Theorem~\ref{thm:ait_lrt_optimal}.
\hfill $\square$

\section{More empirical evaluation}\label{subsec:comp_H}

We consider the simulation setting with two-armed binary rewards, and the ground truth of the two arm means are 0.6 and 0.4, with horizon $T=200$. We try to evaluate the power performance of our AIT correction method, and compare it with ART by \cite{ham2023art}. We evaluate it on the $\epsilon$-TS family with $\epsilon$ changing from 0 to 0.8. We consider $T=200$ total samples. 

Our proposed AIT achieves consistently higher power than ART.

\begin{table}[H]
\centering
\begin{tabular}{l c c}
\hline
 & ART  & AIT  \\
\hline
$\epsilon$-TS(0.0) & 0.434 & 0.520 \\
$\epsilon$-TS(0.1) & 0.607 & 0.675 \\
$\epsilon$-TS(0.2) & 0.704 & 0.750 \\
$\epsilon$-TS(0.4) & 0.810 & 0.827 \\
$\epsilon$-TS(0.8) & 0.871 & 0.878 \\
\hline
\end{tabular}
\caption{
Test power comparison under adaptive data collection.
AIT consistently achieves higher power than ART across all algorithms.
}
\label{tab:more_art_ait_comp}
\end{table}


\section{Why naive formulations fail}\label{app:why_naive}
One might wonder why we do not use simpler objectives such as $R - wT$ or $R/T - wT$. 
Besides the fact that they do not satisfy Constraint~1 and distort the meaning of '$w$', we discuss their practical 
problems in more detail.

\textbf{For the objective $R - wT$:}  
Failing Constraint~1 means it can assign a higher score to an experiment that is 
actually worse. For example, consider option A with $(T=100, R=50)$ and choose 
$w=0.2$. Now consider another experiment with $(T=101, R=50.3)$. The second option 
has \emph{both} a lower mean reward and a higher cost, yet $R - wT$ assigns it a 
better score (recall that $R$ is the cumulative, not average, reward).  
In addition, $R - wT$ does not satisfy the intercept-shift property in Constraint~3. 
This makes the choice of $w$ depend heavily on prior knowledge about the average baseline of 
reward. For instance, if $w$ is mistakenly chosen smaller than the average reward across 
arms, then the objective can instantly explode: one can increase $T$ indefinitely, let 
the cumulative reward grow, and obtain arbitrarily large scores.

\textbf{For the objective $R/T - wT$:}  
This formulation under-represents reward and places excessive emphasis on experiment length. Because the average reward $R/T$ exhibits 
sharp diminishing returns as $T$ increases, any reward improvements quickly become 
negligible. As a consequence, this objective overwhelmingly favors algorithms that 
use near-minimal experiment horizons, regardless of reward performance.

\section{Detailed simulation setting for hypothesis tests}\label{app:tests}

We set the type~I error (or family-wise error rate) to $\alpha = 0.05$ and the target power to $1 - \beta = 0.8$.
For power analysis, we assume that each arm has a Bernoulli reward distribution, with arm means
$p \sim D_p = \mathrm{Beta}(5,5)$.
Moreover, for power analysis of all tests except ANOVA, we consider a minimum effect size of $d_0 = 0.1$, and we only compute power when the difference between a tested pair exceeds $d_0$.
We consider the following test setup:

\paragraph{ANOVA test.}
We use the standard one-way ANOVA test~\citep{fisher1925statistical} to compare the mean rewards across arms. 
In our simulation, we apply the test directly on the observed sample means of the three arms.

\paragraph{t-tests.}
A t-test~\citep{student1908probable} assesses whether the means of two groups differ significantly. We consider two variants of the t-test, named \textit{t-constant} and \textit{t-control}. \\

\noindent\textbf{t-constant:} compares each arm’s mean reward to a fixed baseline of $0.5$.
We conduct a \textbf{one-sided} test, counting only cases where an arm’s mean exceeds $0.5$.
For each arm $k$, the null hypothesis is $H_0: p_k = 0.5$, and the alternative hypothesis is $H_1: p_k > 0.5$.
We control the family-wise error rate, while power is averaged across arms.
The family-wise error rate can be expressed as
\[
\Pr(\text{reject } H_0 \text{ for all arms} \mid p_1 = p_2 = p_3 = 0.5),
\]
and the one-sided power (by symmetry across arms) can be written as
\[
\Pr(\text{reject } H_0 \text{ for arm 1} \mid p_1 - 0.5 > d_0,\; p_1,p_2,p_3 \sim D_p).
\]

\noindent\textbf{t-control:} treats the first arm as a control and compares each remaining arm to it.
For each arm $k \neq 1$, the null hypothesis is $H_0: p_k = p_1$.
We conduct a \textbf{two-sided} test with alternative hypothesis $H_1: p_k \neq p_1$.
Unlike t-constant, the null hypothesis location is unrestricted.
Accordingly, the family-wise error rate is
\[
\Pr(\text{reject } H_0 \text{ for all arms} \mid p_1 = p_2 = p_3),
\]
and the two-sided power can be expressed as
\[
\Pr(\text{reject } H_0 \text{ for arm 1 vs.\ arm 2} \mid |p_1 - p_2| > d_0,\; p_1,p_2,p_3 \sim D_p).
\]
We use a two-sided test for t-control to distinguish it from t-constant and to provide a more inclusive evaluation when the direction of effects is not specified in advance.

\paragraph{Tukey tests.} 
A Tukey test~\citep{tukey1949comparing} is a multiple-comparison procedure used to determine which pairs of group means differ significantly while controlling the family-wise error rate across all comparisons. 
Here, we consider testing whether the best arm performs significantly better than each of the other arms. 
Assume that arm $k$ is the best arm. We want to test $H_0: \exists i \neq k,\, p_k \leq p_i$ and $H_1: \forall i \neq k,\, p_k > p_i$. 
We do so by testing each other arm against the best-performing arm, and the result is successful if we identify the best arm and reject the null for each pair. 
The type I error is:
\[
\Pr\!\left(
\text{reject } H_0 \text{ for all arms}
\;\middle|\;
p_1 = p_2 = p_3
\right).
\]
The power is then defined as
\[
\Pr\!\left(
\text{reject } H_0 \text{ for arm } k 
\;\middle|\;
p_k - p_i > d_0,\; \forall i \neq k,\; p_1,p_2,p_3 \sim D_p
\right).
\]

\section{Optimization Interface}
\label{app:user_input}
\begin{figure}[H]
    \centering
    \includegraphics[width=\linewidth]{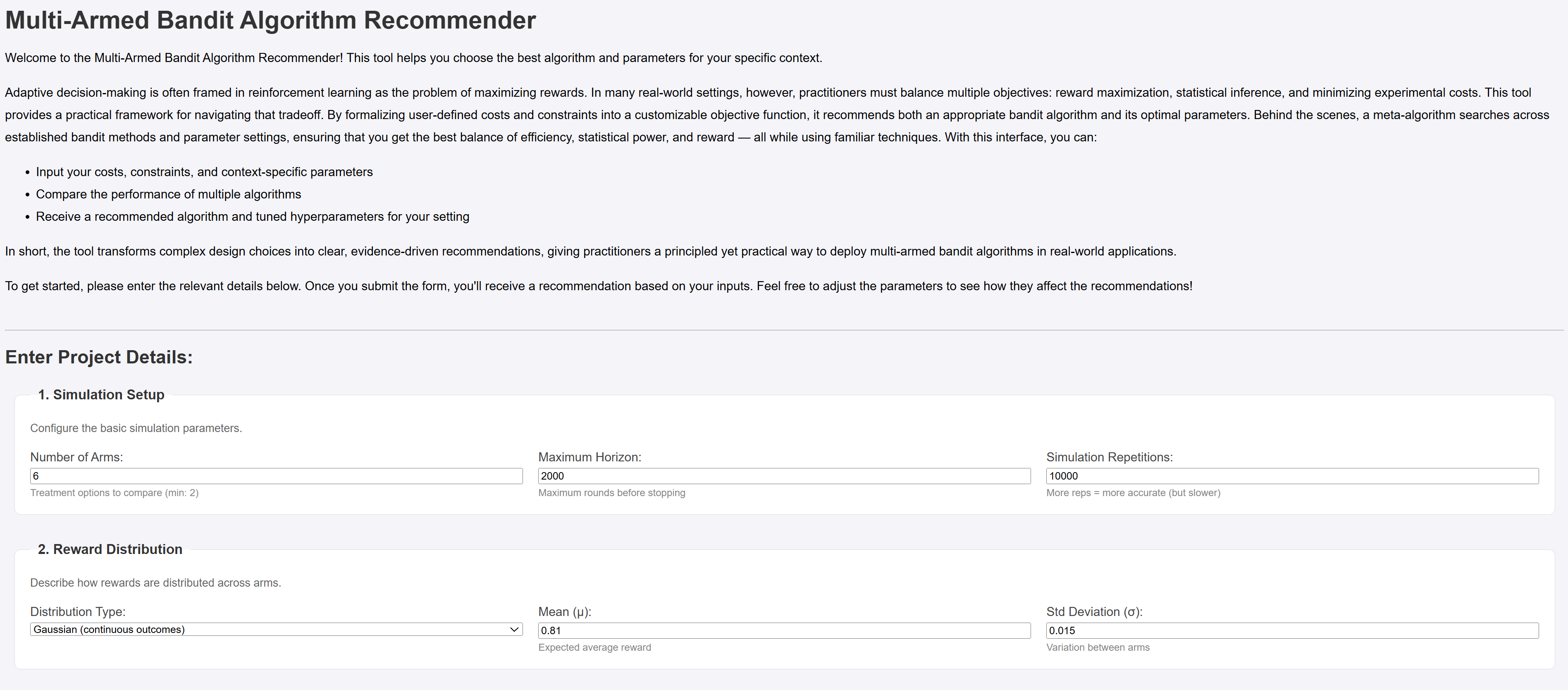}
    \includegraphics[width=\linewidth]{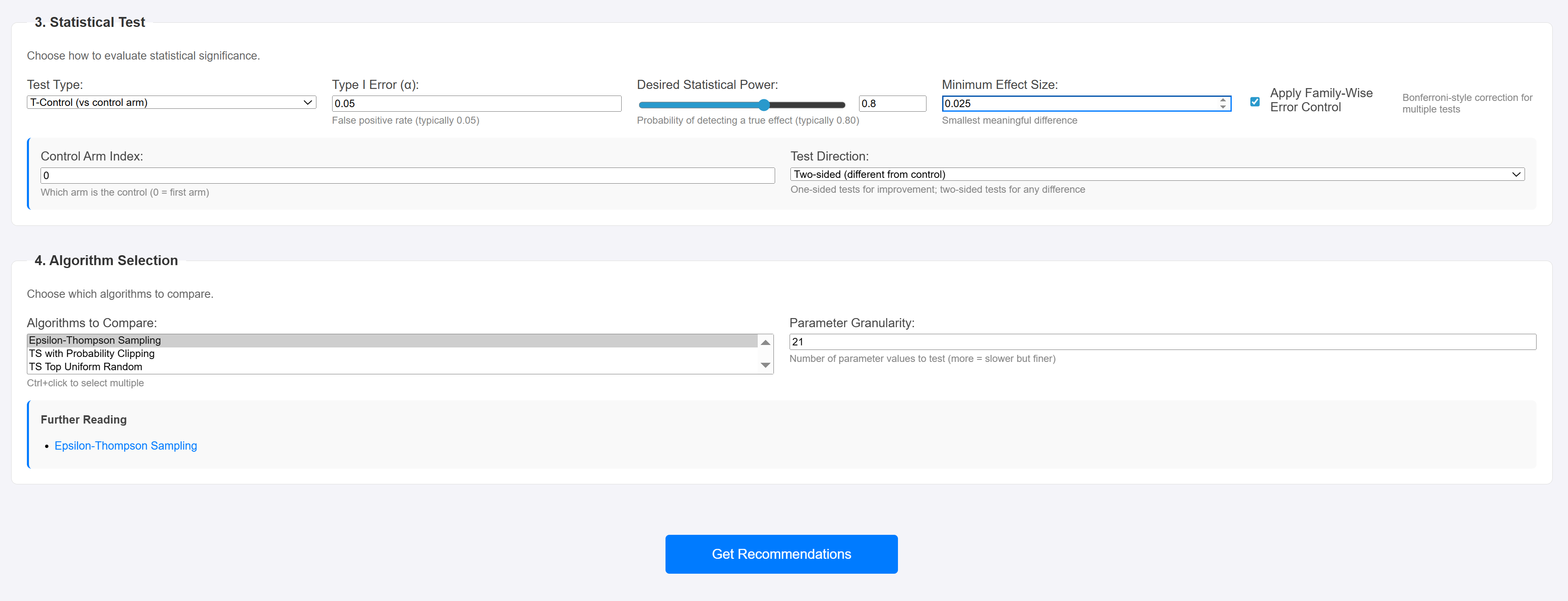}
    \caption{Screenshot of our optimization framework web application user input page.}
    \label{fig:postdiff_ui_input}
\end{figure}

\section{Challenges and efficient methods for MAB experiment simulation}\label{appendix:approximation}
Our main hypothesis testing correction approach (Procedure~\ref{proc:alg_induced_correction}) is simple to implement in a real experimental setting. However, it becomes computationally costly when conducting power analysis simulations for an MAB algorithm.

More specifically, for each replication out of the total \(N\) replications in a power analysis simulation, our test correction (Procedure~\ref{proc:alg_induced_correction}) requires simulating a null-hypothesis scenario and estimating the corresponding critical region threshold. Estimation variance in this threshold not only increases the variance of power estimates, but also introduces systematic bias, dragging estimated power toward \(0.5\). To see this, note that power is computed as $P\!\left(S(h_T) > \hat{s}\right)$,  where \(S(h_T)\) is the test statistic computed from experimental outcome \(h_T\), and \(\hat{s}\) is the estimated critical region threshold. If the variance of \(\hat{s}\) becomes large, the rejection probability converges to \(0.5\), regardless of the true alternative. Empirically, we find that at least \(M = 500\) null-hypothesis replications are typically required to reduce power estimation bias to within \(1\%\). This leads to a total computational cost of \(M \times N\) bandit runs, which is substantially larger than a standard MAB simulation that requires only \(N\) replications.

To address this challenge, we combine three complementary techniques that substantially reduce both runtime and memory usage. First, we fully vectorize the implementation and execute simulation replications jointly. Second, we reduce the total number of null-distribution simulations by binning similar estimated null hypotheses and simulating only a constant number of representative grid points. Critical region thresholds for intermediate null specifications are then obtained via interpolation (see Procedure~\ref{proc:approximation}). Lastly, we accelerate each bandit replication and reduce memory usage by updating the bandit algorithm in batches rather than at every step, and by aggregating results within each batch. The resulting batched adaptive experiment process is shown in Procedure~\ref{proc:batched_bandit}, which serves as an approximation to the standard adaptive experiment process in Procedure~\ref{proc:bandit}. Below we provide additional explanation for Procedure~\ref{proc:batched_bandit}..

Concretely, at time \(t\), the total batch size grows geometrically, while the number of distinct actions sampled within that batch scales at a cube-root rate, i.e., proportional to \(t^{1/3}\). Each selected action is then evaluated using multiple independent reward draws. Under this design, the effective influence of newly arriving data remains controlled at a fixed proportion relative to the existing data, while the allocation distribution across arms remains sufficiently granular. Empirically, for a two-arm Thompson Sampling experiment with horizon \(T=200\), Bernoulli rewards, and means \((\mu_1,\mu_2)=(0.6,0.4)\), the approximation error in average reward is below $0.001$.

Meanwhile, for each batched action, we store both the aggregated reward sum \(R = \sum_j r_j\) and the aggregated squared reward sum \(R^2 = \sum_j r_j^2\). In standard MAB settings with Bernoulli rewards, storing \(R\) alone is sufficient, since \(r^2 = r\). However, our power analysis framework is designed to support more general reward models. For example, when updating a Bayesian model with Gaussian rewards, both the sample mean and variance are required. If rewards were stored only via aggregated sums \(R\), the sample variance could not be recovered. Storing \(R^2\) preserves second-moment information while maintaining a compact, batch-level representation of the data.

Combining these three techniques, our simulations achieve orders-of-magnitude speedups compared to the naïve implementation. For example, under the naïve design, a single configuration with \(N = 10{,}000\) replications (assuming an additional \(M = 500\) null simulations without approximation) and horizon \(T = 200\) can take over an hour to run and require roughly \(20\)~GB of memory, whereas our proposed approach completes in under a few seconds with negligible memory usage.

\section{Procedures}\label{appendix:algorithms}

\begin{procedure}[H]
\caption{Regular Adaptive Experiment Process}\label{proc:bandit}
\begin{algorithmic}[1]
\State \textbf{Input:} Number of arms $K$; arm reward distributions $\vec{\nu}$; experiment horizon $T$; algorithm $\pi$
\Procedure{Bandit}{$K,\, \vec{\nu},\, T,\, \pi$}
    \State Initialize empty history $H_0 = ()$
    \For{$t = 1$ to $T$}
        \State Select arm $a_t \sim \pi(h_{t-1})$
        \State Observe reward $r_t \sim \nu_{a_t}$
        \State Update history $h_t = (a_1, r_1, \ldots, a_t, r_t)$
    \EndFor
\EndProcedure
\State \textbf{Output:} Experiment history $h_T = (a_1, r_1, \ldots, a_T, r_T)$
\end{algorithmic}
\end{procedure}

\begin{procedure}[H]
\caption{Approximation Procedure for Adaptive Experimentation Power Analysis}\label{proc:approximation}
\begin{algorithmic}[1]
\State \textbf{Input:} Number of arms $K$; power analysis prior $D_{\vec{\nu}}$; experiment horizon $T$; algorithm $\pi$; test statistic $S$, type I error constraint $\alpha_0$; number of simulation repetitions $M$, number of approximation grid points $B$.
\Procedure{PowerAnalysis}{$K,\, D_{\vec{\nu}}, \, T,\,  \pi,\, S,\, \alpha_0,\, M, \, B$}
    \For{$m = 1$ to $M$}
        \State Sample $\vec{\nu} \sim D_{\vec{\nu}}$
        \State Obtain $h_{m,T}$ by running \textbf{Bandit}$(K, \vec{\nu}, T, \pi)$ 
        \State Record cumulative mean reward $\bar{r}_{m,t}$ from $h_{m,t}$ for $t=1,...,T$
        \State Get MLE of the $H_0$ parameter $\hat{\theta}_{H_0,m}$ using $h_{m,T}$
    \EndFor
    \State Select $B$ equally spaced grid points between $\min(\hat{\theta}_{H_0,m})$ and $\max(\hat{\theta}_{H_0,m})$: $\hat{\theta}'_{H_0,1}, \ldots, \hat{\theta}'_{H_0,B}$
    \For{$b = 1$ to $B$}
         \State Set $\nu_{0,b}$ to be the distribution under parameterization $\hat{\theta}'_{H_0,b}$
        \State Perform \textbf{AlgoInducedCrit}$(K,T,\nu_{0,b}, S,\pi,\alpha_0,M/B)$ to obtain adjusted critical region sequence $\{C'_{b,t}\}_{t=1}^T$
    \EndFor
    \For{$m = 1$ to $M$}
        \For{$t=1$ to $T$}
            \State Get interpolated $C_{m,t}$ from $\{C'_{b,t}\}_{b=1}^B$ based on $\hat{\theta}_{H_0,m}$
            \State Compute $S(h_{m,t})$ and perform hypothesis testing
            \State Record test result $x_{m,t} = \mathbb{1}_{ S(h_{m,t}) \in C_{m,t} }$
    
        \EndFor
    \EndFor
\EndProcedure
\State \textbf{Output:} Type II error vector $\vec{\beta}=(\hat{\beta}_1,...,\hat{\beta}_T)$ where
$\hat{\beta}_t = 1 - \frac{1}{M}\sum_{m=1}^M x_{m,t}$; cumulative mean reward vector $\vec{r}=(\bar{r}_1,...,\bar{r}_T)$ where $\bar{r}_t = \frac{1}{M}\sum_{m=1}^M \bar{r}_{m,t}$
\end{algorithmic}
\end{procedure}

\begin{procedure}[H]
\caption{Batched Adaptive Experiment Process}
\label{proc:batched_bandit}
\begin{algorithmic}[1]
\State \textbf{Input:} Number of arms $K$; arm reward distributions $\vec{\nu}$; experiment horizon $T$; algorithm $\pi$
\State Initialize compressed history $h_0 = ()$
\State $t \gets 0$

\While{$t < T$}

    \State $s_t \gets \mathrm{round}(1 + 0.05\, t)$          \Comment{step size}
    \State $n_t \gets \mathrm{round}(s_t^{1/3})$            \Comment{number of batched actions}
    \State $m \gets s_t / n_t$                              \Comment{repetitions per action}

    \For{$i = 1$ to $n_t$}
        \State Sample action $a_{t,i} \sim \pi(h_t)$
        \State Observe $m$ rewards $r_{t,i,1:m} \sim \nu_{a_{t,i}}$

        \State Compute compressed entry:
        \[
           R_{t,i} = \sum_{j=1}^m r_{t,i,j},
           \qquad
           R_{t,i}^2 = \sum_{j=1}^m r_{t,i,j}^2
        \]
    \EndFor

    \State Append history:
    \[
        h_t \gets h_t \cup
        \big(a_{t,1}, R_{t,1}, R_{t,1}^2,\;
              \dots,\;
              a_{t,n_t}, R_{t,n_t}, R_{t,n_t}^2\big)
    \]

    \State $t \gets t + s_t \cdot n_t$
\EndWhile

\State \textbf{Output:} Compressed history $h_T$
\end{algorithmic}
\end{procedure}

\begin{procedure}[h] 
\caption{Objective Function Optimization Procedure}\label{proc:optimization}
\begin{algorithmic}[1]

\State \textbf{Input:} Number of arms $K$; power analysis prior $D_{\vec{\nu}}$; maximum acceptable experiment horizon $T_{\max}$; algorithm $\pi$; list of $n$ algorithm parameters $\vec{\phi}=(\phi_1,\dots,\phi_n)$ to be optimized from; test statistic $S$; type I and type II error constraints $\alpha_0$, $\beta_0$; objective score function $F$ with weight $w$; number of simulation repetitions $M$; number of approximation grid points $B$.

\Procedure{ObjOpt}{$K,\, D_{\vec{\nu}},\, T_{\max},\, \pi,\, \vec{\phi},\, S,\, \alpha_0,\, \beta_0,\, F,\, w,\, M,\, B$}

\State Initialize admissible parameter set $\Phi=\{\}$

\For{$i = 1,\ldots,n$}

    \State Run
    \textbf{PowerAnalysis}$(K, D_{\vec{\nu}}, T_{\max}, \pi_{\phi_i}, S, \alpha_0, M, B)$ and obtain $\vec{\beta}_i$ and $\vec{r}_i$

    \If{\(\exists\, t \in [T_{\max}] \text{ such that } \hat{\beta}_{i,t} \le \beta_0\)}
        \State Record $T_{\phi_i} = \min\{\, t : \hat{\beta}_{i,t} \leq \beta_0 \,\}$
        \State Record $r_{\phi_i} = r_{i, T_{\phi_i}}$ from $\vec{r}_i$
        \State Add $\phi_i$ to $\Phi$
    \Else
        \State \textbf{break}
    \EndIf

\EndFor

\State $\phi^* \gets \arg\min_{\phi_i \in \Phi} F(T_{\phi_i}, r_{\phi_i}, w)$
\EndProcedure
\State \textbf{Output:} suggested parameter $\phi^*$
\end{algorithmic}
\end{procedure}

\end{document}